\newcommand{\tinyspacing}{\let\CS=\@currsize\renewcommand
  {\baselinestretch}{.9}\tiny\CS}
\newcommand{\singlespacing}{\let\CS=\@currsize\renewcommand
  {\baselinestretch}{1}\tiny\CS}
\newcommand{\myspacing}{\let\CS=\@currsize\renewcommand
  {\baselinestretch}{1}\tiny\CS}
\newtheorem{theorem}{Theorem}[section]
\newtheorem{observation}{Observation}[section]
\newtheorem{lemma}[theorem]{Lemma}
\newcommand{\comment}[1]{}
\newcommand{\QED}{\mbox{}\hfill \rule{3pt}{8pt}\vspace{10pt}\par}
\newenvironment{observation}{\mbox{}\\[-10pt]{\sc Observation.} }%
{\mbox{}\\}}
\def\Y{{\cal Y}}
\def\eps{{\epsilon}}
\def\ones{{\bf 1}}
\def\u{{\mathbf u}}
\def\v{{\mathbf v}}
\def\b{{\mathbf b}}
\def\c{{\mathbf c}}
\def\valpha{{\mathbf \alpha}}
\newcommand{\ignore}[1]{}
\newcommand{\eat}[1]{}
\begin{document}
\title{Understanding Fashion Cycles as a Social Choice}
\date{}
\author{
Anish Das Sarma$^1$, Sreenivas Gollapudi$^2$, Rina Panigrahy$^2$, Li Zhang$^2$\\
$^1$Yahoo! Research, $^2$Microsoft Research\\
$^1${\tt anishdas@yahoo-inc.com},  $^2${\tt\{sreenig,lzha,rina\}@microsoft.com}
}

\maketitle
\thispagestyle{empty}
\begin{abstract}

We present a formal model for studying fashion trends, in terms of
three parameters of fashionable items: (1) their innate utility; (2)
individual boredom associated with repeated usage of an item; and (3)
social influences associated with the preferences from other
people. While there are several works that emphasize the effect of
social influence in understanding fashion trends, in this paper we
show how boredom plays a strong role in both individual and social
choices. We show how boredom can be used to explain the cyclic choices
in several scenarios such as an individual who has to pick a
restaurant to visit every day, or a society that has to repeatedly
`vote' on a single fashion style from a collection.  We formally show
that a society that votes for a single fashion style can be viewed as
a single individual cycling through different choices.

In our model, the utility of an item gets discounted by the amount of boredom that has accumulated over the past; this boredom increases with every use of the item and decays exponentially when not used. We address the problem of optimally choosing items for usage, so as to maximize over-all satisfaction, i.e., composite utility, over a period of time. First we show that the simple greedy heuristic of always choosing the item with the maximum current composite utility can be arbitrarily worse than the optimal. Second, we prove that even with just a single individual, determining the optimal strategy for choosing items is NP-hard. Third, we show that  a simple modification to the greedy algorithm that simply doubles the boredom of each item is a provably close approximation to the optimal strategy. Finally, we present an experimental study over real-world data collected from query logs to compare our algorithms.
\end{abstract}
\newpage

\section{Introduction}

When an individual or a society is repeatedly presented with multiple
substitutable choices, such as different colors of cars or different
themes of musicals, we often observe a recurring shift of preferences
over time, or commonly known as {\em fashion trends}.  While some
trends are relatively easy to explain (e.g., sweater sales increasing
in the winter), some other trends may result from a variety of
factors.  In this paper, we first describe a utility model which we
think may explain such trends. Then we study the computational issues
under the model and provide simple mechanisms by which consumers may
make {\em close to optimal} decisions on which products to consume and
when, in order to maximize their {\em overall utility}.  We then
conduct experiments to show how various parameters in our model can be
estimated and to validate  our algorithm.

Understanding fashion trends are of significant academic interests as
well as commercial importance in various fields, including brand
advertising and market economics.  Therefore, there's a large body of
work in multiple disciplines -- sociology
(e.g.~\cite{herbert69,bl52}), economics (e.g.~\cite{margaret67}), and
marketing (e.g.~\cite{jmr93,william68}, on theories for evolution of
fashion.  Despite much study, there is a lack of a well accepted
theory. This is probably not surprising as what makes us like or
dislike an alternative and how that changes over time involves
economical, psychological, and social factors.  Next we describe three
such factors that influence fashion.

First, and perhaps the most basic, cause of a product becoming trendy
is its {\em utility}, intuitively capturing the value it adds to an
individual. We call this the innate utility of a product.
Second, psychologically, a person's utility of consuming a product may
be discounted by constant consumption of the same item --- as one gets
tired of existing products, he desires new and different ones.
Third, while at an individual level, we have certain inclinations
based on our tastes, these are influenced by social phenomena, such as
what we see around us, friends' and celebrities' preferences.

In this paper, we present a formal model that unifies the
aforementioned three broad categories of factors using {\em innate
utility}, {\em individual boredom}, and {\em social influence} (as
depicted in Figure~\ref{fig:fashion} and explained below).  We attempt
to construct a mathematical model for these factors and use the model
to explain the formation of fashion trends. We use the term {\em item}
to denote any product, good, concept, or object whose fashion trend we
are interested in.

\vspace{-0.2cm}
\begin{enumerate}
\setlength{\itemsep}{-0.1cm}
 \item{\bf Innate utility:} The utility of an item captures the innate
 value the item provides to an individual.  We assume it is fixed,
 independent of other influences.

 \item {\bf Individual boredom:} If we use any item for too long, we
 get bored of it, and our appreciation for it goes down.  This is
 modeled as a negative component added to the utility. This factor
 grows if one repeatedly uses the same item and fades away when one
 stops consuming the item.

 \item {\bf Social influence:} Our valuation of an item can change
 significantly by the valuation of our friends or influencing people.
 For example, when we see that many people around us like something we
 may start liking it; or we may consciously want to differ from some
 other people around us.  We model such influences as a weighted
 linear combination from other people.\footnote{Additional influence
 may come from the association of a product to things/concepts we like
 or dislike. For instance, someone may be very fond of green
 technologies or dislike things that are scary.  We may simply model
 such concepts as individuals.}
\end{enumerate}
\vspace{-0.2cm}

\begin{figure}[t]
\centering
\includegraphics[width=0.5\columnwidth]{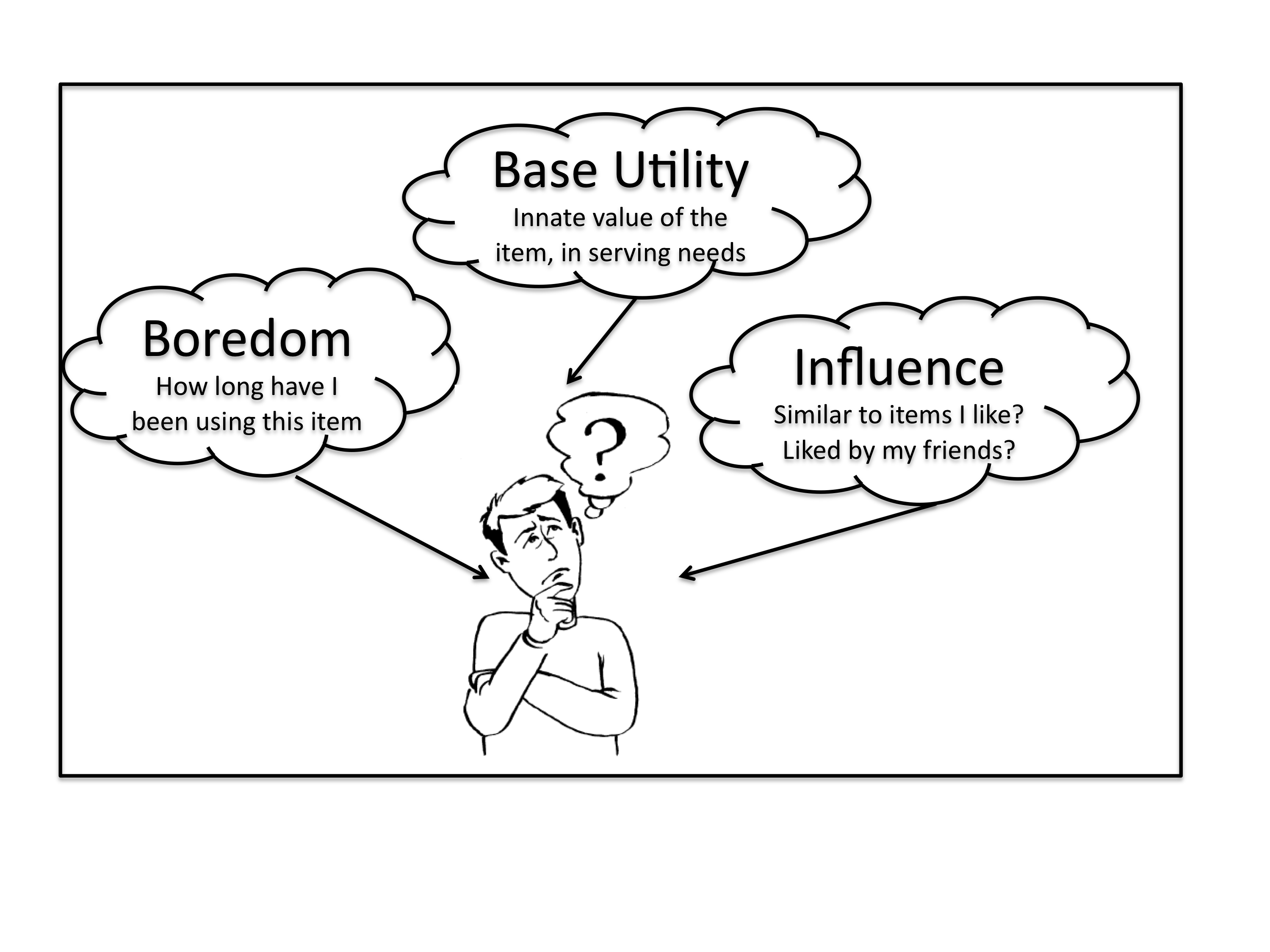}
\vspace{-1.2cm}
\caption{\label{fig:fashion}\bf Factors influencing fashion trends}
\vspace{-0.8cm}
\end{figure}

To model boredom on any item at any given time $t$, we associate with
each past usage of the item, say at time $t'$, a factor in the form of
$(1-r)^{t-t'}$ for some $r<1$.  Then the total boredom on the item
takes the sum of the this factor from all the past usage of the item.
This definition captures the intuition that the boredom grows if an
item is repeatedly used.  As we show in our experiments, such
exponential decay model matches well people's interests in songs and
movies. The utility maximization under this model, albeit NP-hard,
naturally displays cyclic patterns.  We also provide a simple strategy
to achieve near to optimal utility when the decay factor $r$ is small.

The effect of influence can be formalized using a linear model. For
example, to model social influence, consider one item and a society
consisting of $m$ people. Let $G$ denote the influence graph on these
people that is directed where each edge is labeled with a weight that
indicates the strength of this influence. A high value on an edge,
such as outgoing edges from celebrities indicates a strong outgoing
influence; on the other hand a negative value indicates a desire to
distance oneself or be different from the source node. Let $A$ denote
the corresponding influence matrix.  let $u_i(t)$ denote the utility
of the item to the $i$th person; let $\u (t)$ denote the vector of
utilities. If we assume that for each time step the influence from all
friends of a person add linearly then we may write $\u(t+1) = A
\u(t)$, which is similar to~\cite{jmr93}. Note that for stability of
this iterative powers, we should assume that its top eigenvector
has magnitude $1$..  We will show that under this influence model, we may
treat the society as an individual making choices under the effect of
boredom.

\comment{
The graph may also include nodes for other entities such as
organizations, political parties, social groups.  More generally, to
capture the effect of influence with other items/concepts the vector
$u$ may contain the utility (or liking) for all items and concepts;
correspondingly, the matrix $A$ needs to contain entries that reflect
the similarity between items/concepts.}

\medskip

\noindent\textbf{Discussion of our results.}
We argue that fashion trends can be viewed as not just the effect of
the influence of a privileged few but more as a democratic process
that churns the social boredom and channels the innate instinct for
change. Boredom is the innate psychological force that dulls the
effect of a constant stimulus over a period of time and make us look
for newer stimuli. It is well known that the mind tends to grow
oblivious to almost all types of sensations (visual, olfactory, touch,
sound) to which it is exposed for a long time. Thus the `coolness' of
a fashionable item drops over time and things that we haven't seen or
used in a long time begin to appear more `cool'.

We show how several scenarios involving individual and social choices
are essentially driven by the same underlying principles. The
individual choice may be as simple as choosing a restaurant to visit
on a particular day. Alternatively, it may be a social choice where
the market forces of a society `chooses' different fashions such as
styles for clothing, or cars. Or, a news channel is picking the front
page news article to maximize readership and has to choose from
different types of news articles, e.g., politics, natural-disaster,
celebrity gossip. Each news item may be popular or fashionable for a
period of time and then boredom sinks in and the media may switch
focus to a different event probably of an entirely different
type. Boredom is thus the single most and simplest explanation for
oscillations in individual and social choices.  This is not at all
surprising; indeed boredom is perhaps a strong influence when we make
choices such as food, clothes, fashions, governments. Social influence
no doubt plays a large part in individual choices. But when we look at
the social system as a whole the influences across individuals are
forces within the system and in the net effect it simply gives a
larger voice to the more influential individuals. We also note that influence by itself
is not sufficient to create fashion cycles. In fact, if all influences are positive then without
any boredom the system $\u(t+1) = A\u(t)$ converges to a fixed value resulting in a fixed fashion choice.

Finally, we recognize that the factors we consider are by no means comprehensive; several other `external' factors may change the values of nodes. For example a shortage of oil may increase the utility of green technologies, the strength of the edges in the graph may change, the structure of the graph may change with new node and edge formations. Our decay model for boredom and linear model for influence may be too simplistic.  Nonetheless, we believe the influence graph and boredom capture several important aspects of the underlying psychological processes that people use to value items.

\medskip

\noindent\textbf{Outline of the paper.}
All the main theoretical results achieved by this paper are presented
in Section~\ref{sec:contributions}, with proofs appearing in
Section~\ref{sec:proofs}. Section~\ref{sec:expts} presents
detailed experimental results for validating our model and algorithms;
our experiments use real-world data from Google Trends~\cite{gtrends}
on the popularity of songs and movies in the last 3 years. Related
work is presented next and we conclude in
Section~\ref{sec:conclusions}.

\eat{
The contributions of the paper are outlined as follows:
\begin{itemize}
\item Section~\ref{} formalizes our model for fashion, and studies the problem of finding an optimal strategy for consuming items. We show the non-optimality of the greedy algorithm, NP-hardness of the general version of the problem, and prove a close approximation for a slight adaptation of the greedy algorithm.

\item Section~\ref{} briefly considers switching costs between items.

\item Section~\ref{} generalizes our results for an individual to an entire society; our main result is to show that the behavior of an entire society, with social influence, can be modeled as a single individual, thus allowing us to apply techniques from the rest of the paper.

\item Section~\ref{} presents experimental results using real-world data from Google Trends~\cite{gtrends} on the popularity of songs in the last 3 years. We experimentally verify our basic model, learn its parameters, and compare multiple algorithms for choosing items. 
\end{itemize}
}

\medskip

\noindent\textbf{Related work.}
There are several theories of fashion evolution in various
communities, e.g., sociologists have modeled fashion trends as a
collection of several social forces such as differentiation,
influence, and association. While there have been several explanations
of cycles in fashion
trends~\cite{bl52,herbert69,james66,margaret67,william68}, most past
work does not offer a formal study.  We compare our work with one
notable exception~\cite{jmr93} next. The focus of our paper is on
understanding the impact of various factors---boredom, association
rules, and utility---on the fashion choices made by individuals and a
society. We explain the existence of cycles based on our formal model
of fashion, and provide algorithms for making optimal choices.

Reference~\cite{jmr93} proposed a formal model of fashion based on association rules. Intuitively, an individual's utility for an item is impacted by how similar it is to items he likes, and how dissimilar it is to items he dislikes. Further, he is influenced by the society through other individuals' preferences for various items. Consider a single item, whose consumption vector is given by $c(t)$ at time $t$. Considering the recurrence $c(t+1)  = W c(t)$, where $W$ is the weight influence matrix, \cite{jmr93} observed that if the matrix $W$ has a complex top eigenvalue (corresponding to negative influences), then the item's consumption pattern may be periodic, producing cycles in preferences. Our model of utility is similar to the consumption model in~\cite{jmr93}. However, we consider an additional parameter of boredom that is essential to explain fashion cycles in a society with non-negative influences as such a matrix $W$ always has a real top eigenvalue.

\eat{
by a person $j$ is denoted $c_j$. Let us denote by $c = (c_1,\ldots,c_n)$ the vector of consumption values for an item, which varies with time. The consumption vector at time $t$ is denoted by $c(t)$. Then, the consumption for person $1$ at time $t+1$ is given by

$c_1(t+1) = w_{1,1} c_1(t) + w_{1,2} c_2(t) +.. + w_{1,n} c_n(t)$

\noindent where $w$'s are the weighted influence due to other people. In short, we obtain the recurrence $c(t+1)  = W c(t)$, where $W$ is the weight influence matrix.
}

Some other recent work (e.g., ~\cite{ek-book}) study behavioral influences in social networks, such as in terms of information propagation. For instance, ~\cite{ikmw07} studies how two competing products spread in society,~\cite{lbk09} provides techniques for tracking and representing ``memes'', which may be used to analyze news cycles, and~\cite{lsk06} studies how recommendations propagate in a network through social influence.

The focus of our paper is on formalizing a practical theory for fashion trends with boredom, combined with utility and a simple social behavior. Therefore, for a large part of the paper we consider only a single individual and study fashion trends based on boredom, and utility. Further, in our extension to multiple individuals, we assume a linear weighting of influences from friends' preferences for particular items.

\eat{
In this paper, we add boredom as another factor that affects overall utility of using an item. Suppose a person has a choice of $n$ neighborhood for dining every night. Let $u_i$ denote the original utility obtained by eating at the $i$th restaurant. Clearly, if we assume these utilities are fixed then she would always eat the same restaurant every day. However, in reality she would get bored of eating at the same place. In this paper, we consider a model of fashion that accounts for drop in the utility of items, depending on how much recent 'memory' the person has of using a particular item. We observe that, even in the absence of negative influence, boredom explains cycles in fashion trends. We provide algorithms for determining the choice of items at any time, depending on the composite utility derived from association rules, original utility and boredom.
}

\eat{
There are several theories of how fashions evolve.
There is very little formal study of fashion cycles. [J.Market.Res 93]  proposes a formal model 
for fashion that is basically based on association. They assume that a persons utility for an item depends
on how similar it is to things he likes and dissimilar to things he dislikes. 

Our utility depends not only on our evaluation but also on the utility
of others. The valuation diffuses through the social network as
described.}

\section{Contributions of Our Study}
\label{sec:contributions}

\subsection{Modeling individual boredom}

We consider a user living in discrete time periods $0,1,\ldots$ and
consuming one item among $n$ substitutable items at each time; for
example, a person needs to decide which restaurant to go to every
night or which political party to vote for every four years.  We
assume that each item $i$ brings a base utility $v_i$ to the user.
Now if we assume that the utilities are fixed then the user would
always choose the same item with the maximum $v_i$. This would be
inconsistent with the observed common behavior of cycling among
multiple items, which we refer to as fashion cycles.  In order to
explain fashion cycles, it is necessary to model the utility
dependence of the consumptions across different time periods.

We propose a simple model in which the utility of an item at any time
$t$ is the base utility discounted by a boredom factor proportional to
the ``memory'' the person has developed by using this item in the
past.  The more the user has used the item, the more memory and
boredom is developed for the item, and consequently the less utility
the item has to the user.

We naturally assume that the memory drops geometrically over time, and
the total memory of a person is bounded.  This leads to the following
definition of memory. Let $0<r<1$ be a memory decay rate, i.e., the rate at which a
person ``forgets'' about things. Let $x_i(t)\in\{0,1\}$ indicate if
the user uses the item $i$ at time $t$. Then the memory of $i$ at time
$t$ is $M_i(t)=r\sum_{\tau=0}^{t-1} x_i(\tau) (1-r)^{t-\tau}$.  We add
the factor $r$ so that $M_i(t)\leq 1$. The boredom $b_i(t)=\alpha_i
M_i(t)$ is proportional to the memory and depends on the item.  The
utility of item $i$ is defined as $u_i(t)=v_i-b_i(t)=v_i-\alpha_i
M_i(t)$. Henceforth, we will refer to $v$ as the base utility and $\alpha$ as the boredom coefficient.

\subsection{Utility optimization with boredom}

With the above model, one natural question is to compute the choices
of the items to maximize the user's overall utility. If we allow the
user to choose at continuous time, the maximization problem becomes
relatively easy as the best way to consume an item is to do it
cyclically at regular time intervals.  However, such regular placement
may not be realizable or is hard to find. As we will show below, it is
NP-hard to compute the best consumption sequence.

We also consider the natural greedy strategy and show that the under
the greedy strategy, the utility of each item is always bounded in a
narrow band and so each item is consumed approximately cyclically.
The greedy strategy, however, may have produce a sequence giving poor
overall utility.  We provide a simple heuristics, called double-greedy
strategy, and show that it emulates the cyclic pattern of the optimal
solution on the real line and yields utility close to the optimal when
$r$ is small.

\subsubsection{Greedy algorithm}

In the greedy strategy, at each time $t$, the user consumes the item
with the maximum utility $u_i(t)$.  This strategy is intuitive and
probably consistent with how we make our daily decisions. We show that
the utility gap between any two items is small all the time.  We
provide an example to show it has poor performance in terms of utility
maximization.  Denote by $\alpha=\max_i \alpha_i$.

\begin{theorem}\label{thm:gap1}
There exists a time $T$ such that for any $t\geq T$, 
$u  \leq \max_i u_i(t)\leq u + O(r\alpha \log n)$ where $u$ 
is the unique solution to
the following system:

For all items with $f_i > 0$,  the quantity $v_i - f_i \alpha_i = \mu$;
if $f_i = 0$ then $v_i < \mu$; and
$\sum f_i =1$.
\end{theorem}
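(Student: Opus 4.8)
The plan is to identify the value $u$ in the statement with the common ``equilibrium'' utility $\mu$ of the water-filling system, and to read $f_i$ as the long-run frequency with which greedy consumes item $i$. I would first record two elementary facts about the dynamics. Writing the memory recurrence in the form $M_i(t+1)=(1-r)(M_i(t)+r\,x_i(t))$ and using that greedy consumes exactly one item per step (so $\sum_i x_i(t)=1$), one gets the conservation law $\sum_i M_i(t)=(1-r)\bigl(1-(1-r)^t\bigr)<1$ for every $t$. Next I would establish existence and uniqueness of the solution to the stated system by a standard water-filling argument: the function $F(\mu)=\sum_i \max\{0,(v_i-\mu)/\alpha_i\}$ is continuous, strictly decreasing on the interval where it is positive, tends to $+\infty$ as $\mu\to-\infty$ and vanishes once $\mu\ge\max_i v_i$, so there is a unique $\mu$ with $F(\mu)=1$; setting $f_i=\max\{0,(v_i-\mu)/\alpha_i\}$ recovers exactly the three conditions, and $u:=\mu$.

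The lower bound $u\le\max_i u_i(t)$ is the easy direction and follows directly from the conservation law. Suppose toward a contradiction that $u_i(t)<\mu$ for every $i$. For each active item (one with $v_i>\mu$, i.e.\ $f_i>0$) this says $v_i-\alpha_i M_i(t)<\mu$, hence $M_i(t)>(v_i-\mu)/\alpha_i=f_i$. Summing over the active items gives $\sum_i M_i(t)\ge\sum_{i:\,f_i>0}M_i(t)>\sum_i f_i=1$, which contradicts $\sum_i M_i(t)<1$. Hence at every time some item has utility at least $\mu$, so $\max_i u_i(t)\ge u$. Note this half needs no burn-in and no appeal to $r$ being small.

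The upper bound $\max_i u_i(t)\le u+O(r\alpha\log n)$ is the real work, and I expect it to be the main obstacle. The goal is equivalent to showing that the item $j$ attaining the maximum has memory not much below its equilibrium value, namely $M_j(t)\ge f_j-O(r\log n)$, since $u_j(t)=v_j-\alpha_j M_j(t)$. The mechanism to exploit is that greedy is self-correcting: an item whose utility rises above the pack is consumed, which immediately drops its utility by $\Theta(r\alpha_i)$ (one consumption raises $M_i$ by about $r(1-M_i)$), so no single item can run away. Concretely I would (i) show that after a burn-in time $T$ (of order $r^{-1}\log n$) every active item has been consumed enough that its memory has entered a band around $f_i$, and (ii) bound the overshoot of the current maximum by controlling how long an active item can be ``starved'' between consecutive consumptions. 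While not consumed, $M_j$ decays geometrically, so over a starvation stretch of length $s$ the utility rises by $\alpha_j M_j(t)\bigl(1-(1-r)^s\bigr)\le \alpha_j M_j(t)\,\min\{1,rs\}$; the nominal period is $1/f_j$, which alone yields only an $O(r\alpha)$ band, and the extra logarithmic factor comes from bounding the worst-case starvation by $O(f_j^{-1}\log n)$ via a harmonic / coupon-collector accounting over the at most $n$ items that can repeatedly preempt $j$ before falling back below it.

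The crux, and the step I expect to be delicate, is making the starvation bound rigorous, because greedy need not be exactly periodic and the preemption order among near-equal items can be adversarial; the natural tool is a potential argument that combines the conservation law (which pins the total signed deficit $\sum_i(f_i-M_i)$ at $\Theta(r)$) with a bound on the total hoarded excess $\sum_i (M_i-f_i)^+$ of the idle items, so that a single item's deficit cannot exceed the collective budget by more than the $O(r\log n)$ slack. I would also check along the way that the $(1-r)$ discrepancy between the true steady-state memory $f_i(1-r)$ and the value $f_i$ used in the system contributes only $O(r\alpha)$ error, which is absorbed into the stated bound.
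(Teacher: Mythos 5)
Your treatment of the uniqueness of $u$ and of the lower bound $\max_i u_i(t)\ge u$ matches the paper's proof essentially verbatim: the water-filling characterization $f_i=\max\{0,(v_i-u)/\alpha_i\}$ with $\sum_i f_i=1$, and the contradiction via $\sum_i M_i(t)\le 1$. That part is fine (and you are right that it needs no burn-in).

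The gap is the upper bound, which is the entire substance of the theorem, and your proposal does not prove it — it describes a target and a hoped-for mechanism and then explicitly defers the key step (``the crux, and the step I expect to be delicate, is making the starvation bound rigorous''). The two quantitative claims your route rests on, namely that the worst-case starvation of the maximizing item is $O(f_j^{-1}\log n)$ and that a ``potential argument'' pins each item's deficit $f_i-M_i(t)$ to within $O(r\log n)$ of equilibrium, are exactly the statements that need proof, and neither is established; it is also not clear that greedy even drives each $M_i(t)$ into a band around $f_i$ in the per-item sense you assume (the paper never proves or uses such a per-item statement). What the paper actually does is quite different in mechanism: fixing $t_1=t$ and the maximizing item $x(t_1)$, it traces \emph{backwards} through times $t_1>t_2>\cdots>t_k$ at which a previously-unseen item was last picked, so that only items in $S_{j-1}=\{x(t_1),\dots,x(t_{j-1})\}$ are consumed on $[t_j+1,t_1]$. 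For the weighted averages $A(S,t)=\sum_{\ell\in S}u_\ell(t)/\alpha_\ell$ and $B(S)=\sum_{\ell\in S}1/\alpha_\ell$ it derives the one-step inequality $A(S_{j-1},t_{j-1})\le \frac{B(S_{j-1})}{B(S_j)}A(S_j,t_j)+r$ from the memory conservation law, and unrolling it yields $u_{x(t_1)}(t_1)\le u_i(t_k)+r\sum_{\ell=1}^{k-1}1/B(S_\ell)$; the $\log n$ appears because $B(S_\ell)\ge \ell/\alpha$ makes this a harmonic sum, and the argument closes by anchoring to an item $i$ with $u_i(t)\le u$ (whose existence follows from the same counting as the lower bound). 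So while your intuition about where the $\log n$ comes from (a harmonic sum over the at most $n$ preempting items) is pointing at the right phenomenon, you would need to supply an argument of comparable precision — a direct starvation/coupon-collector bound on the maximizing item is not obviously available, since near-equal items can trade the maximum adversarially, which is precisely why the paper averages over the sets $S_j$ instead of tracking any single item.
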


While the greedy algorithm has the nice property of keeping the
utility gap between any items small, it may produce a sequence with
poor overall utility. 

\begin{observation}
The Greedy strategy of always picking the highest utility item each day is not optimal.
\end{observation}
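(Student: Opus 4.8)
The plan is to prove the stronger statement suggested by the abstract --- that greedy is not merely suboptimal but can be worse than optimal by an unbounded factor --- by exhibiting a single explicit two-item instance on which a trivial periodic strategy beats greedy, and then letting the decay rate $r$ tend to $0$. I would take a ``tempting'' item $1$ with large base utility and matching boredom, $v_1=\alpha_1=H$, together with a ``dummy'' item $2$ with $v_2=\alpha_2=0$. The reason for setting $\alpha_1=v_1$ is that then $u_1(t)=H(1-M_1(t))$ stays strictly positive for all $t$ (since $M_i(t)<1$ always), so item $1$'s utility never drops below the dummy's constant utility $u_2(t)\equiv 0$.

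First I would analyze greedy on this instance. Because $u_1(t)=H(1-M_1(t))>0=u_2(t)$ at every step, greedy always selects item $1$, with no ties ever occurring. Substituting $x_1\equiv 1$ into the memory definition and summing the geometric series gives $M_1(t)=(1-r)(1-(1-r)^t)\to 1-r$, so the per-step utility greedy collects tends to $H(1-(1-r))=Hr$. Over a horizon $T$ this yields total utility $HrT$ up to a transient term that is bounded in $T$ (only $O(H/r)$, again by summing the geometric tail), so greedy accrues $\approx HrT$ for $T$ large.

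Next I would analyze the naive alternating strategy $1,2,1,2,\dots$. Now item $1$ is used only on even steps, and the same geometric-series computation with gap $2$ gives the steady-state memory $M_1\to (1-r)^2/(2-r)$, which is $\approx\half$ for small $r$. Hence each use of item $1$ yields $\approx H/2$, and since exactly half the steps are item-$1$ steps the per-step average is $\approx H/4$, for a total of $\approx HT/4$. Comparing the two, the alternating strategy beats greedy whenever $H/4>Hr$, i.e.\ for every $r<1/4$, and the ratio of utilities is $\approx 1/(4r)$, which is unbounded as $r\to 0$. Since the optimal strategy is at least as good as the alternating one, this establishes the Observation together with the ``arbitrarily worse'' claim.

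The main obstacles are purely in making these asymptotics honest rather than conceptual. I must control the transient boundary terms so that the ``$\to$'' limits become genuine bounds over a finite horizon (straightforward, since the memory converges geometrically fast and I may take $T$ large relative to $1/r^2$); confirm that greedy's choice is forced independent of any tie-breaking rule (here there is real slack, as $u_1(t)>0=u_2(t)$ strictly for all $t$, so ties never arise); and verify that the period-$2$ value of $M_1$ is computed from the correct previous-uses-only memory sum in the definition of $M_i(t)$. None of these steps needs more than summing a geometric series, so the real content of the argument is the choice of instance that pins greedy to the over-consumed tempting item.
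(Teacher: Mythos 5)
Your proposal is correct and is essentially the paper's own argument: the paper proves the Observation with the same two-item ``water vs.\ soda'' construction ($v=1,\alpha=0$ versus $v=10,\alpha=10$), observing that greedy hovers at the bland item's utility ($\approx 1$) while alternating achieves $\approx 3$. The one substantive difference is your parameter choice $v_2=\alpha_2=0$ for the dummy: since $M_1(t)<1$ always, greedy is then pinned to the tempting item forever and collects only $\approx Hr$ per step against the alternating strategy's $\approx H/4$, so your instance yields an \emph{unbounded} ratio $\approx 1/(4r)$ as $r\to 0$, whereas the paper's example exhibits only a constant-factor (roughly $3\times$) gap. That strengthening is exactly what the abstract's ``arbitrarily worse than the optimal'' claim calls for, and your geometric-series computations (steady-state memory $1-r$ for constant use, $(1-r)^2/(2-r)\approx 1/2$ for period-$2$ use) and handling of transients and tie-breaking are all sound.
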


To see the non-optimality of greedy, simply consider two items for beverage, say ``water'' and
``soda''. Assume water has low base utility say $1$ that never changes
and zero boredom coefficient.  Soda on the other hand has high utility
say $10$ but also a high boredom coefficent say $10$. So if one drank
soda every day its utility would drop to below that of water. Observe
that the greedy strategy will choose soda till its utility drops to
that of water and then it is chosen whenever its utility rises even
slightly over $1$.  So the average utility of the greedy strategy is
close to $1$. A smarter strategy is to hold off on the soda even if it
is a better choice today so as to enjoy it even more on a later day.
Thus it is possible to derive an average utility that is much higher
than $1$. For example, we can get average utility of about $3$ by
alternating between water and soda in the above example.  Note that
the greedy algorithm produces poor performance in the above example
even for small $r$.

This naturally raises the question: what is the optimal strategy? More
importantly is there an optimal strategy that is a simple 'rule of
thumb' that is easy to remember and employ as we make the daily
choices. Unfortunately it turns out that computing the optimal
strategy is NP-hard.

\subsubsection{NP-hardness}

\begin{theorem}
Given a period $T$, target utility $U^\ast$, and $n$ items, it is NP-hard to determine
whether there exists a selection of items with period $T$ such that the total utility of the selection is at least
$U^\ast$.
\end{theorem}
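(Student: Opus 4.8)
The plan is to establish NP-hardness by a reduction from a strongly NP-hard scheduling/partition problem, exploiting the fact (already implicit in the continuous relaxation noted above, where the optimum consumes each item cyclically at regular intervals) that for a fixed usage frequency an item's accumulated boredom is minimized precisely when its usages are spread out at equal intervals. The goal is to engineer items whose utility-maximizing periodic schedule encodes a solution to the combinatorial instance.

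First I would fix the combinatorial skeleton. For a periodic selection of period $T$, suppose item $i$ is used $k_i$ times within one period, at cyclic positions with consecutive gaps $g_{i,1},\ldots,g_{i,k_i}$ summing to $T$. Using the geometric memory law $M_i(t)=r\sum_{\tau<t} x_i(\tau)(1-r)^{t-\tau}$ and summing over one period in steady state, the total boredom charged to item $i$ can be written in closed form as $\alpha_i$ times a symmetric function of the gap vector $(g_{i,1},\ldots,g_{i,k_i})$. The first analytic step is to show that this function is strictly Schur-convex, so that for fixed $k_i$ it is minimized uniquely at the most balanced gap vector (all gaps equal to $T/k_i$ when divisibility holds), and to derive an explicit lower bound on the penalty incurred by any unbalanced gap vector.

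Next I would design the gadget. Starting from an instance of a perfectly-periodic scheduling problem (equivalently, a 3-Partition--style packing), I would create one item per task with base utility and boredom coefficient chosen so that (a) it is always worthwhile to use each gadget item exactly $k_i$ times per period, fixing the frequencies, and (b) the boredom coefficients are large enough that the convexity gap from any imbalance dominates every other term. A low-utility, zero-boredom filler item (the ``water'' of the earlier example) absorbs any leftover slots. Because exactly one item is consumed per slot, the $T$ slots form a single shared timeline, so the gadget items can all be evenly spaced simultaneously if and only if their prescribed periods admit a conflict-free schedule, i.e.\ iff the source instance is a yes-instance. I would then set $U^\ast$ equal to the utility attained exactly when every gadget item is perfectly spaced: a yes-instance meets $U^\ast$, whereas any no-instance forces at least one imbalanced item and loses at least the convexity gap, falling strictly below $U^\ast$.

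The main obstacle is the quantitative separation. I must (i) derive the steady-state boredom-versus-gaps formula and prove strict convexity with an explicit, non-negligible gap for even the smallest possible imbalance; and (ii) choose $v_i$, $\alpha_i$, $r$, and $T$ so that this gap strictly exceeds the combined effect of all lower-order interactions (frequency perturbations, the filler contribution, and the phase/boundary effects of periodicity) while keeping all numbers polynomial in the input size. The last point is precisely why a strongly NP-hard source such as 3-Partition, whose numeric parameters are polynomially bounded, is essential: it keeps $T$ and the number of items polynomial, so the reduction runs in polynomial time and the threshold test is exact.
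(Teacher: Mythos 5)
Your plan is essentially the paper's proof: the paper reduces from the Regular Assignment Problem (deciding whether items with prescribed periods $p_i$ admit a conflict-free perfectly periodic schedule), uses the same key lemma that for a fixed number of uses per period the accumulated boredom is minimized uniquely by equally spaced usages, adds the same zero-boredom filler item, calibrates $v_i$ and $\alpha_i$ so that marginal utility forces each item to be used exactly $T/p_i$ times, and quantifies a $1/T^2$ utility gap separating regular from non-regular schedules. Your choice of a strongly NP-hard source to keep $T$ polynomial is a reasonable variant of the same architecture (and arguably more careful than the paper's $T=\prod_i p_i$), but the decomposition and key lemmas are the same.
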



\subsubsection{Double-greedy algorithm}
On the positive side we show that there is indeed a simple ``rule of
thumb'' that gives an almost optimal solution when $r$ is small. The
strategy ``double-greedy'' waits longer for items that we get bored of
too quickly.  It is a simple twist on the greedy strategy: instead of
picking the item that maximizes the utility $u_i(t) = u_i - b_i(t)$,
it picks the one which maximizes $w_i(t) = u_i - 2 b_i(t)$. Thus it
doubles the boredom of all items and then runs the greedy strategy. We show that:

\begin{theorem}\label{thm:double-greedy}
Let $\overline{U}$ denote the average utility obtained by the double
greedy algorithm and $U^\ast$ the optimal utility. Then
$\overline{U}\geq U^\ast-O(r\alpha\log n)$ where $\alpha =
\max_i\alpha_i$.
\end{theorem}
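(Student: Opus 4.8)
The plan is to introduce a fractional relaxation and sandwich both the optimum and the double-greedy value against it. Consider the \emph{frequency program}
\[
\Phi^\ast=\max\Big\{\textstyle\sum_i v_i f_i-\sum_i\alpha_i f_i^2 \;:\; f_i\ge 0,\ \sum_i f_i=1\Big\},
\]
which is the average utility one obtains if each item $i$ is consumed a fraction $f_i$ of the time, perfectly regularly (at consumption the steady-state memory of a regularly-used item is $\approx f_i$, so its boredom is $\approx\alpha_i f_i$ and its contribution to the time-average is $f_i(v_i-\alpha_i f_i)$). I would prove the two inequalities $U^\ast\le \Phi^\ast+O(r\alpha)$ and $\overline U\ge \Phi^\ast-O(r\alpha\log n)$, and the theorem follows by subtraction.

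For the first inequality, fix any schedule, let $s(t)$ denote the item chosen at $t$, and for item $i$ let $t_1<t_2<\cdots$ be its consumption times, with gaps $g_j=t_{j+1}-t_j$ and memories $m_j=M_i(t_j)$. Writing $\rho=1-r$, the definition of $M_i$ telescopes into the one-step recurrence $m_{j+1}=\rho^{g_j}(m_j+r)$. Summing its logarithmic form around one period gives $\sum_j g_j=\frac{1}{\ln\rho}\sum_j\ln\frac{m_j}{m_j+r}$, which to leading order in $r$ reads $\sum_j \frac{1}{m_j+r}=T(1+O(r))$. Since item $i$ is used $k_i=f_iT$ times, the AM--HM inequality yields $\sum_j(m_j+r)\ge k_i^2/\sum_j\frac{1}{m_j+r}=Tf_i^2(1-O(r))$, hence the accumulated boredom obeys $\sum_{t:s(t)=i}M_i(t)\ge Tf_i^2-O(rTf_i)$. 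Substituting into the identity (average utility) $=\sum_i v_i f_i-\frac1T\sum_i\alpha_i\sum_{t:s(t)=i}M_i(t)$ and using $\sum_i f_i=1$ shows every schedule has average utility at most $\Phi^\ast+O(r\alpha)$, so $U^\ast\le\Phi^\ast+O(r\alpha)$. This convexity step---showing that no irregular schedule can beat the regularly-spaced frequency bound---is where I expect the main obstacle to lie, since it is the point at which the discrete combinatorial optimum is forced below the smooth relaxation; the logarithmic-telescope-plus-AM--HM route above is the device I would rely on to control the irregular case without a messy direct rearrangement of the pairwise sum.

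For the second inequality I analyze double-greedy as ordinary greedy applied to the doubled coefficients $2\alpha_i$. Theorem~\ref{thm:gap1}, invoked with $2\alpha_i$ in place of $\alpha_i$, then guarantees that after some time the selection quantities $w_i(t)=v_i-2\alpha_i M_i(t)$ lie in a band $[\lambda,\lambda+O(r\alpha\log n)]$, with steady-state frequencies satisfying $v_i-2\alpha_i f_i=\lambda$ for consumed items and $\sum_i f_i=1$. These are exactly the KKT stationarity conditions of $\Phi^\ast$, so the double-greedy frequencies coincide, up to the band error, with the maximizer $f^\ast$. Moreover, whenever $i$ is chosen its criterion value is the running maximum, so $w_i(t)\ge\lambda$, giving $\alpha_iM_i(t)\le (v_i-\lambda)/2$ and hence a \emph{true} realized utility $u_i(t)=v_i-\alpha_iM_i(t)\ge (v_i+\lambda)/2$.

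Averaging this last bound over the horizon gives $\overline U\ge\sum_i f_i\frac{v_i+\lambda}{2}$, and the stationarity relation $\alpha_i f_i^\ast=(v_i-\lambda)/2$ yields $\sum_i f_i^\ast\frac{v_i+\lambda}{2}=\Phi^\ast$; carrying the band width $O(r\alpha\log n)$ through the frequency estimate turns this into $\overline U\ge\Phi^\ast-O(r\alpha\log n)$. Combining with $U^\ast\le\Phi^\ast+O(r\alpha)$ gives $\overline U\ge U^\ast-O(r\alpha\log n)$, as claimed.
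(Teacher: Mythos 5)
Your proposal is correct and follows the same overall architecture as the paper's proof: both introduce the quadratic frequency program ($\Phi^\ast$ in your notation, $U^\ast$ in the paper's), both identify its KKT stationarity conditions with the steady-state band $\mu$ of the gap lemma applied to the doubled coefficients, and both finish by showing the realized double-greedy frequencies and memories track $f_i^\ast$ up to the band width $O(r\alpha\log n)$. Where you genuinely diverge is the step $\mathrm{OPT}\le\Phi^\ast+O(r\alpha)$: the paper disposes of this in one line by asserting that, for a fixed consumption frequency $f$, regularly spaced placement minimizes accumulated boredom (the supporting rearrangement argument appears only in the NP-hardness appendix, for the single-item case). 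Your route via the one-step recurrence $m_{j+1}=(1-r)^{g_j}(m_j+r)$, the logarithmic telescope around one period, and AM--HM is more self-contained and handles irregular schedules directly; note that the ``leading order in $r$'' step is in fact a clean one-sided inequality, since $-\ln(1-x)\ge x$ gives $\sum_j 1/(m_j+r)\le T(1+O(r))$, which is exactly the direction AM--HM needs, and the regime $f_i=O(r)$ where the linearization degrades is harmless because there $Tf_i^2=O(rTf_i)$ and the claimed lower bound on boredom is vacuous. The price of your approach is that it is stated for periodic schedules (you use periodicity to close the telescope), so a limiting argument is needed for the general optimum; the benefit is that it supplies a proof of the one inequality the paper essentially leaves as an assertion. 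Your bookkeeping in the final averaging step (bounding $\sum_i(f_i-f_i^\ast)\frac{v_i+\lambda}{2}$ using $\sum_i f_i=\sum_i f_i^\ast=1$ and $\frac{v_i+\lambda}{2}=\lambda+\alpha_i f_i^\ast$) is equivalent to the paper's Claims 1 and 2, just organized around the per-use utility bound $u_i(t)\ge\frac{v_i+\lambda}{2}$ rather than the average memory $\overline{M}_i$.
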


We note that when $r\to 0$, the utility produced by double greedy is close to the optimal solution.

\subsection{Fashion as a Social Choice}

A choice is a fashion, if it is the choice of a large fraction of the
society. Thus a society only supports a small number of fashions.
Industries often target one type of fashion for each market segment.
Consider a situation where the entire society consists of one fashion
market segment. We will see how in this case such a society can be compared to an
individual making choices to maximize utility under the effect of
boredom. Each individuals utiltities depend not only on his base utility and boredom but
also on the influence from other individuals.  

Consider a society of $n$ people and $m$ possible item choices. The society needs to choose one item out of these at every time step. We will study the problem of the makiing the optimal choice so as to maximize welfare. This  is applicable in the following scenarios: A business is launching the next fashion style for its market segment,  or a radio channel is broadcasting songs in a sequence to maximize the welfare to its audience. 
Let $u_{ij} (t+1)$ denote the  utility of item $i$ to  person $j$ at time $t$; let  $b_{ij}(t)$ denote the boredom value; let $\u_i(t)$ denote the vector of
utitilities to the $n$ people for item $i$, $\v_i(t)$ denote the vector of base utilities, and $\b_i(t)$ denote the vector of boredom values. In the absence
of boredom we will say  $\u_i(t+1) = A \u_i(t)$ where $A$ is the influence matrix. Accounting for boredom we will say,
$\u_i(t+1) = A\ u_i(t) - [\b_i(t+1)-\b_i(t)]$. Note that this is consistent with the case when there is only one individual
where $u_i(t+1) =  u_i(t) - [b(t+1)-b(t)]$.  
Observe that ignoring the effect of boredom we simply get the recurrence $\u_i(t+1) = A \u_i(t)$ or $\u_i(t) = A^t \v_i$. This
recurrence reflects the diffusion of influence through the social network. Note that if the largest eigenvalue of $A$ has magnitude more than $1$ then the process
will diverge and if all eigenvalues are $<1$ it will eventually converge to $0$. So we will assume the maximum eigenvalue of $A$ is has magnitude $1$.  If the gap between
the magnitude of the largest and the second largest eigenvalue is at least $\eps$ then this diffusion process converges quickly in about $\tilde \Theta(1/\eps)$ steps.
 We will focus on the case when rate of boredom $r$ is much slower than the diffusion rate (this corresponds to the case where influences spread fast and the boredom grows slowly). 
We then study the problem of making social choices of items over time so as to maximize welfare.

We will assume that $A$  is diagonlizable and has a real top eigenvalue of $1$ and all the other eigenvalues are smaller in magnitude. In that case  it is well known that for any vector $x$\, $A^t x$ converges to to a fixed point and the speed of convergence depends on the gap between the largest and second largest eigenvalue. We show that under certain conditions if $r/\epsilon$ is small then.
the choices made by the society is comparable to the choices made by an individual with appropriate base utilities and boredom coefficients. Let $W_i(t)$ denote the welfare of the society at time $t$ by choosing item $i$; then $W_i(t)/n$ 

\begin{theorem}\label{thm:socialchoice}
Consider a society with influence matrix $A$ that has largest eigenvalue $1$ and second largest eigenvalue of magnitude at most $1-\eps$.
For computing the welfare over a a sequence of social choices approximately,  such a society can be modelled as a single individual with base uitilities
$\tilde v_i$ and boredom coefficients $\tilde \alpha_i$, where $\tilde v_i = \c' \v_i $ and $\tilde \alpha_i = \c' \valpha_i$ for some vector $\c$.
Let $\tilde u_i(t)$ denote the utility of item $i$ to such an individual at time $t$.

More precisely, differences in the average utility of the society for the same sequence of choices until any time  
$|W_i(t)/n  - \tilde u_i(t)| \le \frac{r}{\eps} O(|\valpha_i|_{\infty})$ for any $t > T$  for some fixed $T$.  The $O$ notation hides factors that depends on $A$. For a real, symmetric matrix the constant is $1$
\end{theorem}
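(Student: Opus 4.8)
The plan is to solve the inhomogeneous linear recurrence explicitly and then exploit the rapid mixing of $A$ to collapse the influence dynamics onto its leading eigenvector, leaving a single scalar ``representative'' utility plus a controllable error. First I would unroll $\u_i(t+1) = A\u_i(t) - [\b_i(t+1)-\b_i(t)]$ from the initial condition $\u_i(0)=\v_i$ (with $\b_i(0)=0$) to obtain
$$\u_i(t) = A^t\v_i - \sum_{s=0}^{t-1} A^{t-1-s}\,\Delta\b_i(s), \qquad \Delta\b_i(s) := \b_i(s+1)-\b_i(s).$$
The crucial structural point is that the society votes for a single item at each step, so the memory $M_i(s)$ driving boredom is a common scalar shared by all people; hence $\b_i(s)=\valpha_i M_i(s)$ and $\Delta\b_i(s)=\valpha_i\,\Delta M_i(s)$ with $\Delta M_i(s)$ scalar. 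From the memory recurrence $M_i(s+1)=(1-r)M_i(s)+r(1-r)x_i(s)$ one gets $\Delta M_i(s)=r[(1-r)x_i(s)-M_i(s)]$, and since $x_i(s)\in\{0,1\}$ and $M_i(s)\in[0,1]$ this gives $|\Delta M_i(s)|\le r$. This $O(r)$ bound is exactly what produces the factor $r$ in the final error.

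Next I would introduce the spectral decomposition. As $A$ is diagonalizable with a simple top eigenvalue $1$ and all other eigenvalues of magnitude at most $1-\eps$, write $A^k = \mathbf{p}\mathbf{q}' + E_k$, where $\mathbf{p},\mathbf{q}$ are the right and left eigenvectors for eigenvalue $1$ normalized so $\mathbf{q}'\mathbf{p}=1$, and $\|E_k\|\le C(1-\eps)^k$; for symmetric $A$ one has $\mathbf{p}=\mathbf{q}$ and $\|E_k\|_2\le (1-\eps)^k$, i.e. $C=1$. Define the representative weight vector $\c = \frac{\ones'\mathbf{p}}{n}\,\mathbf{q}$ (proportional to the left eigenvector, so more influential people get more weight), so that $\tilde v_i=\c'\v_i$ and $\tilde\alpha_i=\c'\valpha_i$ as in the statement. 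Substituting the decomposition into $W_i(t)=\ones'\u_i(t)$ and using the telescoping identity $\sum_{s=0}^{t-1}\Delta M_i(s)=M_i(t)$ (since $M_i(0)=0$), the leading eigenvector-aligned part of $W_i(t)$ equals exactly $(\ones'\mathbf{p})\big[\mathbf{q}'\v_i - (\mathbf{q}'\valpha_i)M_i(t)\big]=n\,\tilde u_i(t)$, which is precisely the utility of a single individual with base utility $\tilde v_i$ and boredom coefficient $\tilde\alpha_i$.

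It then remains to bound the residual $W_i(t)-n\tilde u_i(t) = \ones'E_t\v_i - \sum_{s=0}^{t-1}\Delta M_i(s)\,\ones'E_{t-1-s}\valpha_i$. The first term is a pure transient: $|\ones'E_t\v_i|\le C\|\ones\|\|\v_i\|(1-\eps)^t$ decays geometrically, so choosing $T=\Theta(\tfrac1\eps\log\tfrac1r)$ makes it negligible for every $t>T$, which fixes the $T$ in the statement. For the second term I would pull out $|\Delta M_i(s)|\le r$, re-index by $k=t-1-s$, and sum the geometric series, obtaining a bound $r\,C\|\ones\|\|\valpha_i\|\sum_{k\ge0}(1-\eps)^k = \frac{r}{\eps}C\|\ones\|\|\valpha_i\|$. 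Dividing by $n$ and using $\|\ones\|_2=\sqrt n$ together with $\|\valpha_i\|_2\le\sqrt n\,|\valpha_i|_\infty$ collapses the prefactor $\|\ones\|_2\|\valpha_i\|_2/n$ to $|\valpha_i|_\infty$, yielding $|W_i(t)/n-\tilde u_i(t)|\le \frac{r}{\eps}\,O(|\valpha_i|_\infty)$ with constant $C$ (equal to $1$ in the symmetric case), as claimed.

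The main obstacle I anticipate is the norm bookkeeping in this last step: the clean $|\valpha_i|_\infty$ bound only appears because the $\sqrt n$ from $\|\ones\|_2$ and the $\sqrt n$ from passing $\|\valpha_i\|_2$ to $|\valpha_i|_\infty$ cancel the $1/n$ normalization exactly, and for non-symmetric $A$ one must absorb the obliquity of the eigenbasis (its condition number) into the constant $C$ hidden by the $O(\cdot)$. A secondary subtlety is justifying $\|E_k\|\le C(1-\eps)^k$ uniformly in $k$, including the boundary term $E_0=I-\mathbf{p}\mathbf{q}'$, which for symmetric $A$ is the orthogonal projection off the top eigenvector and so has norm $1=(1-\eps)^0$, keeping $C=1$ throughout.
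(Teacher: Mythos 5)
Your proposal is correct and follows essentially the same route as the paper's proof: unroll the inhomogeneous recurrence, project onto the top eigenvector of $A$ (the paper's Observation with $c$ built from the leading eigenvector), bound $|\Delta \b_i| = O(r|\valpha_i|)$, telescope the boredom increments into $\b_i(t)$, and sum the geometric series in $(1-\eps)$ to get the $r/\eps$ factor. Your write-up is in fact somewhat more careful than the paper's on the normalization of $\c$, the choice of $T$, and the $\|\valpha_i\|_2\le\sqrt n\,|\valpha_i|_\infty$ bookkeeping, but the underlying argument is the same.
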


\section{Technical details}
\label{sec:proofs}

\subsection{Individual choice}

The following Lemma is used in the proof of Theorem~\ref{thm:gap1}.

\begin{lemma}\label{lem:mem}
$\sum_i M_i(t)\leq 1$, and $\sum_i M_i(t) \to 1$ for large $t$. When $t=\Omega(1/r)$, $\sum_i M_i(t)=1-O(\exp(-tr))$.
\end{lemma}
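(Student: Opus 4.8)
The plan is to reduce $\sum_i M_i(t)$ to a single geometric series by exploiting the fact that exactly one item is consumed at each time step. Writing out the definition and interchanging the two summations,
\[
\sum_i M_i(t) = r\sum_i \sum_{\tau} x_i(\tau)(1-r)^{t-\tau} = r\sum_{\tau} (1-r)^{t-\tau}\Big(\sum_i x_i(\tau)\Big).
\]
The crucial observation is that since the user picks exactly one of the $n$ items at every time $\tau$, we have $\sum_i x_i(\tau)=1$ for every $\tau$. Hence the inner sum collapses and the whole expression becomes a pure geometric series that is independent of the actual choice sequence; this is the one modeling input the proof really needs, and it is exactly why $\sum_i M_i(t)$ can be controlled uniformly over all strategies.

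First I would evaluate the resulting geometric sum in closed form. Summing $r\sum_{\tau}(1-r)^{t-\tau}$ over the relevant range of $\tau$ yields $1-(1-r)^{t+1}$; note that the normalizing factor $r$ in the definition of $M_i$ is precisely what cancels the $1/r$ coming out of the geometric series, so the total saturates at a constant rather than growing with $t$. From this closed form all three assertions follow at once: since $(1-r)^{t+1}\ge 0$ we get $\sum_i M_i(t)\le 1$; since $0<r<1$ we have $(1-r)^{t+1}\to 0$, so $\sum_i M_i(t)\to 1$; and the deviation from the limit is exactly the term $(1-r)^{t+1}$.

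To quantify the convergence rate I would then apply the elementary inequality $1-r\le e^{-r}$, which gives $(1-r)^{t+1}\le e^{-r(t+1)}\le e^{-tr}$, and hence $\sum_i M_i(t)=1-O(\exp(-tr))$. In particular, when $t=\Omega(1/r)$ the error $e^{-tr}$ is at most a small constant and decays from there, matching the stated regime.

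There is essentially no hard step here: the only point requiring care is the bookkeeping of the summation index (the precise exponent in the geometric series, and correspondingly whether the saturation value is written as $1$), together with confirming that the collapse $\sum_i x_i(\tau)=1$ is what removes all dependence on the specific item-selection sequence. Everything after that is a one-line geometric-series evaluation followed by the bound $1-r\le e^{-r}$.
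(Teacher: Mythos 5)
Your proof is correct and rests on exactly the same observation as the paper's: because precisely one item is picked at each step, $\sum_i M_i(t)$ collapses to a choice-independent geometric series (the paper packages this as the recurrence $\sum_i M_i(t+1)=(1-r)\sum_i M_i(t)+r$ and unrolls it, which is the same computation). The only caveat is the index bookkeeping you yourself flag: taken literally, the definition $M_i(t)=r\sum_{\tau=0}^{t-1}x_i(\tau)(1-r)^{t-\tau}$ gives $(1-r)\bigl(1-(1-r)^{t}\bigr)$ rather than $1-(1-r)^{t+1}$, an off-by-one already present in the paper (its recurrence adds $r$ where the definition adds $r(1-r)$), and it affects nothing at the stated $O(\cdot)$ precision.
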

\begin{proof}
Observe that the memory scales down by a factor of $1-r$ each time step; exactly one item is picked and $r$ is added to its memory.
So $\sum_i M_i(t+1) = (1-r)\sum_i M_i(t)  + r$. This recurrence gives,  
$\sum_i M_i(t) = (1-r)^{t} \sum_i M_i(0) + r \sum_{j=0}^t (1-r)^j +(1-r)^{t}( \sum_i M_i(0)  -1) $. Since $M_i(0)=0$, $\sum_i M_i(t)\leq 1$. 
Observe also that after t = $\Omega(1/r)$ steps this becomes $1+ O(exp(-tr))$
\end{proof}

We are now ready to prove Theorem~\ref{thm:gap1}.
\begin{proof}(\textbf{Theorem~\ref{thm:gap1}})
To see that the solution to the given system is unique, note that $f_i = (\frac{v_i -u}{\alpha})^{+}$ (where $x^{+}$ denotes $max(x,0)$, and so
$\sum_i  (\frac{v_i -u}{\alpha})^{+} = 1$. This must have a unique solution as $\sum_i (\frac{v_i -u}{\alpha})^{+}$ is decreasing function of $u$ and strictly decreasing as long as the sum is positive.
Let $u$ denote the solution to the above system. 

We now show $\max_i u_i(t)\geq u$ for any $t$.  This is done by
contradiction. Suppose that for all $i$ $u_i(t)<u$.  We have that
$\sum_i(\frac{v_i-u}{\alpha_i})^{+}<
\sum_i \frac{v_i-u_i(t)}{\alpha_i}$. But $\sum_i \frac{v_i-u_i(t)}{\alpha_i} = \sum_i M_i(t)\leq 1$. We have that $\sum_i(\frac{v_i-u}{\alpha_i})^{+}<1$, a
contradiction.

Let $S_g$ denote the set of all the items ever picked by the greedy
algorithm. Let $T$ be the time by which each item in $S_g$ has been
used at least once. By Lemma~\ref{lem:mem}, after some steps $\sum_i
M_i(t)$ converges to arbitrarily close to $1$. Lets assume for
simplicity of argument that it is exactly $1$ with sufficiently large
$T$. To show the upper-bound on $\max_i u_i(t)$, we show that for
$t\geq T$ and any $i\in S_g$, $\max_j u_j(t)-u_i(t)=O(\alpha r\log
n)$.

Denote by $x(t)$ the item that has the maximum utility at time $t$. 
It suffices to show that $u_{x(t)}(t)\leq u_i(t)+O(\alpha r \log n)$.
We recursively compute a decreasing sequence of $t_j$ as follows. Let
$t_1 = t$. For $j>1$, suppose we have computed $t_{j-1}$. Let $S_{j-1}
=
\{x(t_1),x(t_2),\cdots,x(t_{j-1})\}$. Now let
$t_j=\max_{t'<t_{j-1}, x(t')\notin S_{j-1}} t'$. We stop when there is
$k$ such that $x(t_k)=i$.  Since $t>T$, the process is guaranteed
to stop.  By the above construction, we know only items in
$S_{j-1}$ are picked by the greedy algorithm in the interval
$[t_{j}+1, t_1]$.  For any $S\subseteq \{1,\cdots, n\}$, let
$A(S,t)=\sum_{\ell\in S}\frac{u_\ell(t)}{\alpha_\ell}$, and $B(S)=
\sum_{\ell\in S}\frac{1}{\alpha_\ell}$.  We will show that for $1<j\leq k$.
\begin{equation}\label{eq:1}
A(S_{j-1},t_{j-1})\leq \frac{B(S_{j-1})}{B(S_j)}A(S_j, t_j)+r\,.
\end{equation}

First observe that
\begin{equation}\label{eq:2}
A(S,t)= \sum_{\ell\in S}\frac{u_\ell(t)}{\alpha_\ell}
=\sum_{\ell\in S}\frac{V_\ell -\alpha_\ell M_\ell(t)}{\alpha_\ell}
=\sum_{\ell\in S}\frac{V_\ell}{\alpha_\ell} -\sum_{\ell\in S} M_\ell(t)\,.
\end{equation}

(\ref{eq:1}) follows from the following claims.

\paragraph{Claim 1.} $A(S_{j-1},t_{j-1})\leq A(S_{j-1},t_j+1)$. 
\begin{proof}
Since any item picked by the greedy algorithm in $[t_j+1,t_{j-1}]$ is
in $S_{j-1}$, we have that for $t'\in [t_j+1,t_{j-1})$, $\sum_{\ell\in S_{j-1}}M_{\ell}(t'+1) = (1-r)\sum_{\ell\in S_{j-1}}M_{\ell}(t')+r
\geq \sum_{\ell\in S_{j-1}}M_{\ell}(t')$. The last inequality is by $\sum_{\ell} M_{\ell}(t')\leq 1$.  Therefore 
$\sum_{\ell\in S_{j-1}}M_{\ell}(t_{j-1})\geq
\sum_{\ell\in S_{j-1}}M_{\ell}(t_{j}+1)$. 
By (\ref{eq:2}), we have $A(S_{j-1},t_{j-1})\leq A(S_{j-1},t_j+1)$. 
\end{proof}

\paragraph{Claim 2.} $A(S_{j-1},t_j+1)\leq A(S_j, t_j)+r$. 
\begin{proof}
Since $t_j\notin S_{j-1}$ is the item picked by the greedy algorithm
at $t_j$, $\sum_{\ell\in S_{j-1}}M_{\ell}(t_j+1) = (1-r)\sum_{\ell\in S_{j-1}}M_{\ell}(t_j)$. Thus
$\sum_{\ell\in S_{j-1}}M_{\ell}(t_j+1)-\sum_{\ell\in S_{j-1}}M_{\ell}(t_j)
= r \sum_{\ell\in S_{j-1}}M_{\ell}(t_j) \leq r$. Again by (\ref{eq:2}), we have
 $A(S_{j-1},t_j+1)\leq A(S_j, t_j)+r$. 
\end{proof}

\paragraph{Claim 3.} $A(S_{j-1},t_j)\leq \frac{B(S_{j-1})}{B(S_j)}A(S_j, t_j)$.\begin{proof}
Immediately follows from $u_{t_j}(t_j)\geq u_{\ell}(t_j)$ for $\ell\in S_{j-1}$. 
\end{proof}
\medskip

Repeating (\ref{eq:1}), we have that
\begin{eqnarray*}
&&A(S_{j-1},t_{j-1})\\
&\leq& \frac{B(S_{j-1})}{B(S_j)}A(S_j, t_j)+r\leq \frac{B(S_{j-1})}{B(S_j)}\left(\frac{B(S_{j})}{B(S_{j+1})}A(S_{j+1}, t_{j+1})+r\right)+r\\
&=& \frac{B(S_{j-1})}{B(S_{j+1})}A(S_{j+1}, t_{j+1})+r\cdot\frac{B(S_{j-1})}{B(S_j)}+r \cdots \leq \frac{B(S_{j-1})}{B(S_k)}A(S_k, t_k)+r\cdot \sum_{\ell={j-1}}^{k-1} \frac{B(S_{j-1})}{B(S_\ell)}\,.
\end{eqnarray*}

Hence, we have that 
\[u_{x(t_1)}(t_1) = \alpha_1 A(S_1,t_1)\leq\alpha_1\left(\frac{B(S_1)}{B(S_k)}A(S_k, t_k)+r\cdot\sum_{\ell=1}^{k-1} \frac{B(S_1)}{B(S_\ell)}\right)=\frac{1}{B(S_k)}A(S_k, t_k)+r\cdot\sum_{\ell=1}^{k-1} \frac{1}{B(S_\ell)}\,.
\]
Since $i = x(t_k)$, for any $i'$, $u_{i'}(t_k)\leq u_i(t_k)$. 
Therefore  $A(S_k,t_k)=\sum_{j\in S_k}\frac{u_j(t_k)}{\alpha_j}
\leq u_i(t_k)\sum_{j\in S_k}\frac{1}{\alpha_j}=u_i(t_k)B(S_k)$.
By that $\alpha = \max_i \alpha_i$, we have 
$B(S_\ell) \geq \ell/\alpha$.  Hence
\[
u_{x(t_1)}(t_1) \leq \frac{1}{B(S_k)}A(S_k, t_k)+r\cdot\sum_{\ell=1}^{k-1} \frac{1}{B(S_\ell)} \leq u_i(t_k)+\alpha r \sum_{\ell=1}^{k-1} 1/\ell = u_i(t_k)+ O(\alpha r \log n)\,.
\]
Since item $i$ is not used during the interval of $[t_k+1,t_1]$, we have 
$u_i(t_k+1)\leq u_i(t_1)$, and hence $u_i(t_k)\leq u_i(t_k+1)+\alpha_i r \leq u_i(t_1)+\alpha_i r$. Therefore we have that,
$\max_j u_j(t_1) = u_{x(1)}(t_1)\leq u_i(t_1)+ O(\alpha r \log n)$.

On the other hand, we know that there exists $i\in S_g$ such that $u_i(t)\leq
u$ because otherwise it would be the case that
$\sum_i(\frac{v_i-u}{\alpha_i})^+>\sum_i\frac{v_i-u_i(t)}{\alpha_i}=\sum_i
M_i(t)\approx 1$, a contradiction.  Hence $\max_j u_j(t)=u+O(\alpha r
\log n)$.
\end{proof}

\subsection{NP-hardness of item selection}

A selection $Y$ is periodic with period $T$, if for any $t$,
$y(t+T)=y(t)$, where $y(t)$ is the item chosen at time $t$.  Clearly,
in a periodic selection, the utility of the item chosen at time $t$ is
the same as the one chosen at time $t+T$.  For utility maximization,
it suffices to consider those items chosen in $[0,T)$.  Let
$U(Y)=\sum_{t=0}^{T-1} u_{y(t)}(t)$ denote the total utility of $Y$ in
$[0,T)$.

\begin{theorem}\label{thm:nphard}
It is NP-hard to decide, given $T$, and $U^\ast$, and $n$ items,
whether there exist a assignment $Y$ with period $T$ such that
$U(Y)\geq U^\ast$.
\end{theorem}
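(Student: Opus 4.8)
The plan is to reduce from a strongly NP-hard number-packing problem, with \textsc{3-Partition} as the natural candidate source: given $3m$ integers $a_1,\dots,a_{3m}$ with $\sum_j a_j=mB$ and each $a_j\in(B/4,B/2)$, decide whether they split into $m$ triples of sum $B$. The reduction is guided by how the objective behaves on a periodic selection $Y$. Writing $S_i$ for the set of times in one period at which item $i$ is used, the total utility decomposes as $U(Y)=\sum_i\bigl(|S_i|\,v_i-\alpha_i\Phi_i(S_i)\bigr)$, where $\Phi_i(S_i)=\sum_{t\in S_i}M_i(t)=r\sum_{t,t'\in S_i,\,t'<t}(1-r)^{t-t'}$ (summed cyclically in steady state) is the accumulated boredom of item $i$. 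Two facts drive the construction. First, for a fixed usage count $|S_i|=m_i$ the penalty $\Phi_i$ is a sum of convex decreasing functions of the cyclic gaps between consecutive uses, so a Jensen/convexity argument shows $\Phi_i$ is minimized exactly when the $m_i$ uses are equally spaced around the period, and any deviation \emph{strictly} increases it. Second, the only coupling between distinct items is the hard constraint that the sets $S_i$ partition the $T$ slots. Hence a near-optimal schedule must place every item as evenly as possible \emph{simultaneously}, and all hardness lives in the impossibility of realizing many even spacings at once inside a single partition of the slots.

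Concretely I would fix a period $T$ and a decay rate $r$ so that each item's ideal (equally spaced) boredom is a clean target, then encode each number $a_j$ as an item whose base utility $v_j$ and boredom coefficient $\alpha_j$ make it profitable to be used a number of times proportional to $a_j$, with the remaining slots absorbed by auxiliary high-frequency ``filler'' items. I set $U^\ast$ to be exactly the total utility attained when \emph{every} item simultaneously achieves its equally spaced placement. Since equal spacing of item $i$ means occupying an arithmetic progression of slots, a schedule meets $U^\ast$ only if the chosen usage counts admit a collision-free system of arithmetic progressions partitioning $\{0,\dots,T-1\}$. The gadget is designed so that such a collision-free packing is feasible if and only if the $a_j$ can be grouped into $m$ equal-sum triples; pinning down this frequency-to-grouping encoding (and, if \textsc{3-Partition} proves awkward, possibly routing through an exact-covering or periodic-scheduling formulation instead) is itself a central part of the construction.

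Correctness would then follow in both directions. From a valid partition one reads off, triple by triple, a placement in which every number item sits on its own equally spaced progression without collisions while fillers occupy the rest, yielding utility exactly $U^\ast$. Conversely, any schedule reaching $U^\ast$ must have every item exactly equally spaced: by the strictness of the convexity bound, an item with an uneven pattern loses strictly more boredom than the slack budgeted into $U^\ast$. Thus the $S_i$ form a disjoint covering of the slots by arithmetic progressions, and the frequency encoding forces this covering to realize an equal-sum grouping of the $a_j$, i.e.\ a yes-instance.

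The main obstacle is quantitative control of the geometric boredom sums for \emph{arbitrary}, possibly very uneven, schedules, so that a single threshold $U^\ast$ provably separates yes- from no-instances and no cleverly unbalanced placement sneaks above it. I expect to handle this by taking $r$ small enough that, by Lemma~\ref{lem:mem}, the steady state is reached well within one period and $M_i$ at an equally used item is within $O(r)$ of its density $f_i$; choosing $T$ polynomially large so transients are negligible; and selecting the $v_i,\alpha_i$ as polynomially bounded rationals whose induced per-item utility gaps dominate the $O(r)$ error of the finite geometric sums. The delicate points are (i) designing the frequency gadget so that the slot-exclusion collisions mirror precisely the equal-sum grouping, (ii) proving the strict even-spacing optimum survives these approximations so the ``exactly equally spaced'' deduction in the converse remains valid, and (iii) verifying that all parameters, including $T$ and $U^\ast$, stay polynomial in the input.
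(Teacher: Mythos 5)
Your high-level skeleton matches the paper's: decompose $U(Y)$ item by item, use convexity of the geometric boredom sums to show that for a fixed usage count the penalty is minimized exactly by equal spacing, calibrate $v_i,\alpha_i$ so the threshold $U^\ast$ forces each item to be used a prescribed number of times and to be (nearly) equally spaced, and locate all the hardness in the combinatorial impossibility of realizing many equal spacings simultaneously within one partition of the $T$ slots. The paper's appendix proof does precisely this. The difference is in the one step you leave open, and unfortunately it is the load-bearing one.

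The genuine gap is the source problem and the gadget. You propose \textsc{3-Partition} and admit that "pinning down this frequency-to-grouping encoding \dots is itself a central part of the construction." But equal spacing of item $i$ with $m_i$ uses means occupying an arithmetic progression of common difference $T/m_i$, so a schedule meeting your $U^\ast$ corresponds to an \emph{exact covering of $\mathbb{Z}/T\mathbb{Z}$ by disjoint arithmetic progressions} with prescribed moduli. Such covering systems are heavily constrained (for instance, in any disjoint covering system the largest modulus must occur at least twice), and there is no evident correspondence between "these progressions tile the cycle" and "the $a_j$ group into $m$ triples of sum $B$" --- the triple structure of \textsc{3-Partition} is a \emph{contiguous-block} packing phenomenon, whereas your constraint is an \emph{interleaving} phenomenon. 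Recognizing this, the paper instead reduces from the Regular Assignment Problem (given $p_1,\dots,p_n$, schedule item $i$ so consecutive uses are exactly $p_i$ apart), whose NP-completeness is imported from prior work on periodic scheduling; it sets $T=\prod_i p_i$, $v_i=2T/p_i$, $\alpha_i=1$, $r=1/T$, plus a zero-boredom filler item, and shows via a Taylor expansion of $U_k=kv-k\alpha(1-r)^{T/k}/(1-(1-r)^{T/k})$ that the marginal utility forces exactly $T/p_i$ uses of item $i$, with a $1/T^2$ gap separating regular from non-regular integral schedules. In short, the combinatorial hardness you defer to a future gadget is exactly the content of the cited scheduling result; without it (or an equivalent construction), the reduction is not established. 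Your quantitative program for controlling the geometric sums and the integrality of spacings is sound and mirrors the paper's Claims 1--3, but it only becomes a proof once the source problem and encoding are fixed.
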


\begin{proof}
The reduction is from the Regular Assignment Problem and is detailed in the appendix.
\end{proof}

\comment{
Consider the case when there is only one item. Given $T$, let $\Y_k$
be the set of all the selections which have period $T$ and choose the
item exactly $k$ times during $[0,T)$.  We claim that for any
$Y\in\Y_k$, $U(Y)\leq v-\alpha\frac{(1-r)^{T/k}}{1-(1-r)^{T/k}}$, and the equality
holds only when $Y$ is a regular assignment, i.e. when $k|T$ and any
two consecutive appearances are $T/k$ apart.

Consider $k$ appearances of the item in the real interval
$[0,T)$. Suppose that they appear at the time $t_0, \cdots, t_{k-1}$,
respectively.  Let $x_i = t_{i+1}-t_i$ for $0\leq i <k-1$ and $x_{k-1}
= t_0+T-t_{k-1}$.  For $0\leq i,j<T$, denote by $S_{ij}$ the set of
numbers from $i$ to $j$ inclusive, as considered on a cycle with
length $T$, i.e. when $i\leq j$, $S_{ij} =
\{i,i+1,\cdots, j\}$ and when $i>j$, $S_{ij} = \{i,i+1,\cdots, T-1\}
\cup
\{0,1,\cdots, j\}$.  Denote by $X_{i,j} = \sum_{\ell\in S_{ij}} x_\ell$.

Because the selections are periodic, the memory at time $t_i$ is:
\[M_i = \frac{1}{1-(1-r)^T}\sum_{j=0}^{k-1} (1-r)^{X_{i,(i+j)\mod k}}\,.\]

Let $M = (1-(1-r)^T)\sum_{i=0}^{k-1} M_i$. Rearranging the terms in
the sum, we have that
\[M=\sum_{\ell=0}^{k-1}\sum_{i=0}^{k-1} (1-r)^{X_{i,(i+\ell)\mod T}}\,.\]

For any $\ell$, $\sum_{i=0}^{k-1} X_{i,(i+\ell)\mod T} = (\ell+1)T$
since each $x_i$ appears exactly $\ell+1$ times in the sum and
$\sum_{i} x_i = T$.  The sum $\sum_{i=0}^{k-1}
(1-r)^{X_{i,(i+\ell)\mod T}}$ is minimized when $X_{i,(i+\ell)\mod T}$
are all equal for $0\leq i<k$.  Therefore, $M$ is minimized when the
$k$ items are evenly spaced, and the minimum value is
$\sum_{i=0}^{k-1} (1-r)^{i+1} = (1-r)^{T/k}\frac{1-(1-r)^T}{1-(1-r)^{T/k}}$.
Since $\sum_i u_i = k v_i - \alpha \sum M_i = k v_i -\alpha
\frac{1}{1-(1-r)^T} M $, we have proven the claim.}

\subsection{Optimality of double-greedy algorithm}

Using the exactly same argument in the proofs of
Theorem~\ref{thm:gap1}, we have that

\begin{lemma}\label{lem:gap2}
There exists a time $T$ such that for any $t\geq T$, 
$\mu  \leq \max_i w_i(t)\leq \mu + O(\alpha \log n r)$ where $\mu$ 
is the unique solution to
the following system:

For all items with $f_i > 0$,  the quantity $v_i - 2 f_i \alpha_i = \mu$;
if $f_i = 0$ then $v_i < \mu$; and
$\sum f_i =1$.
\end{lemma}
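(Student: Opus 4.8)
The plan is to transplant the proof of Theorem~\ref{thm:gap1} essentially verbatim, making two coordinated substitutions throughout: replace the greedy utility $u_i(t)=v_i-\alpha_i M_i(t)$ by the double-greedy score $w_i(t)=v_i-2\alpha_i M_i(t)$, and replace every occurrence of the coefficient $\alpha_\ell$ in the auxiliary quantities by $2\alpha_\ell$. Before doing so, I would observe that the memory bookkeeping is completely unaffected by the change of selection rule: double-greedy still consumes exactly one item per time step, so $\sum_i M_i(t+1)=(1-r)\sum_i M_i(t)+r$ holds as before and Lemma~\ref{lem:mem} (in particular $\sum_i M_i(t)\leq 1$ and $\sum_i M_i(t)\to 1$) applies unchanged. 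Since every memory-based inequality in the original proof uses only this recurrence together with $\sum_\ell M_\ell\leq 1$, those parts carry over directly.

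For the claimed system, I would note it is equivalent to $f_i=\bigl(\tfrac{v_i-\mu}{2\alpha_i}\bigr)^+$ with $\sum_i f_i=1$; as $\sum_i\bigl(\tfrac{v_i-\mu}{2\alpha_i}\bigr)^+$ is decreasing in $\mu$ (strictly so while positive), the solution $\mu$ is unique, exactly as before. The lower bound $\max_i w_i(t)\geq\mu$ then follows by the same contradiction: if $w_i(t)<\mu$ for every $i$, then since $\tfrac{v_i-w_i(t)}{2\alpha_i}=M_i(t)$ we would get $1=\sum_i f_i=\sum_i\bigl(\tfrac{v_i-\mu}{2\alpha_i}\bigr)^+<\sum_i M_i(t)\leq 1$, which is impossible.

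The upper bound is where I would replay the three-claim argument. Setting $A(S,t)=\sum_{\ell\in S}\tfrac{w_\ell(t)}{2\alpha_\ell}$ and $B(S)=\sum_{\ell\in S}\tfrac{1}{2\alpha_\ell}$ gives $A(S,t)=\sum_{\ell\in S}\tfrac{v_\ell}{2\alpha_\ell}-\sum_{\ell\in S}M_\ell(t)$, so the time-dependent part is again exactly $-\sum_{\ell\in S}M_\ell(t)$; consequently Claims 1 and 2 (which rely only on the memory recurrence over an interval in which $x(t')\in S_{j-1}$, respectively on the single decay step at $t_j$) go through without change. The only place the selection rule enters is Claim 3, $A(S_{j-1},t_j)\leq\tfrac{B(S_{j-1})}{B(S_j)}A(S_j,t_j)$: here I would use that double-greedy picks the item $x(t_j)$ maximizing $w_\ell(t_j)$, so $w_{x(t_j)}(t_j)\geq w_\ell(t_j)$ for $\ell\in S_{j-1}$, and appending this maximal element to the $1/(2\alpha_\ell)$-weighted average $A(S,t)/B(S)$ cannot decrease it. Telescoping $A(S_{j-1},t_{j-1})\leq\tfrac{B(S_{j-1})}{B(S_j)}A(S_j,t_j)+r$ down to $S_k$ (where $x(t_k)=i$), then using $w_{i'}(t_k)\leq w_i(t_k)$ to bound $A(S_k,t_k)\leq w_i(t_k)B(S_k)$ and $B(S_\ell)\geq \ell/(2\alpha)$ with $\alpha=\max_i\alpha_i$, yields $\max_j w_j(t)=w_{x(t_1)}(t_1)\leq w_i(t_k)+r\sum_{\ell=1}^{k-1}\tfrac{2\alpha}{\ell}$, and passing from $w_i(t_k)$ back to $w_i(t)$ (item $i$ is idle on $[t_k+1,t_1]$, so $w_i$ only increases there, up to the one-step slack $2\alpha_i r$) gives $\max_j w_j(t)\leq\mu+O(\alpha r\log n)$.

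I do not expect a genuine obstacle, since this is a faithful copy of an already-established argument; the one thing requiring care is bookkeeping the factor of $2$ consistently---it cancels in $w_{x(t_1)}(t_1)=2\alpha_1 A(S_1,t_1)$ with $2\alpha_1 B(S_1)=1$, and reappears benignly in $B(S_\ell)\geq\ell/(2\alpha)$---so that it is absorbed into the $O(\alpha r\log n)$ term. Combining the matching lower and upper bounds then gives $\mu\leq\max_i w_i(t)\leq\mu+O(\alpha r\log n)$, as claimed.
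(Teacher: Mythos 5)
Your proposal is correct and is exactly what the paper does: the paper proves Lemma~\ref{lem:gap2} by simply invoking ``the exactly same argument'' as in Theorem~\ref{thm:gap1}, and your careful bookkeeping of where the factor of $2$ enters (in $w_i$, in $A(S,t)$, $B(S)$, and the final $B(S_\ell)\geq \ell/(2\alpha)$ bound) confirms that the transplant goes through.
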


By using the above theorem, we can prove Theorem~\ref{thm:double-greedy} as follows.
\begin{proof}(\textbf{Theorem~\ref{thm:double-greedy}})
For $0< f\leq 1$, write $\Delta(f)
=r\cdot\frac{(1-r)^{1/f}}{1-(1-r)^{1/f}}$.  

Let $U^\ast$ be the optimal value of the following program.
\begin{equation}\label{eq:0}
\max U = \sum_i f_i(v_i - \alpha_i f_i) \quad \mbox{s.t.}\quad \sum_i f_i\leq 1 \quad \mbox{and $f_i\geq 0$.}
\end{equation}

Let $OPT$ denote the optimal average utility. We have that $OPT\leq
U^\ast+\alpha r$.  This is by observing that for any $0<f<1$, placing
an item $1/f$ apart gives an upper bound on the utility of consuming
the item with frequence $f$.  The bound is $v-\alpha
\Delta(f) \leq v-\alpha(r-f)$ by observing that
 $\Delta(f)>r-f$.

The objective of (\ref{eq:0}) is maximized when there exists $\lambda$
such that $\frac{\partial U}{\partial f_i}=\lambda$ for $f_i>0$ and
$\frac{\partial U}{\partial f_i}<\lambda$ for $f_i=0$, and $\sum_i f_i
= 1$.  Since $\frac{\partial U}{\partial f_i} = v_i - 2\alpha_i f_i$,
$\lambda$ is exactly the same as $\mu$ in the statement of
Lemma~\ref{lem:gap2}. This explains the intuition of the double greedy
heuristics --- it tries to equalize the marginal utility gain of each
item. Denote the optimal solution by
$f_i^\ast$. Then for $f_i^\ast>0$, $v_i - 2\alpha_i f_i^\ast = \mu$. Hence,
\[U^\ast = \sum_i
f_i^\ast(v_i-\alpha_i f_i^\ast) = \sum_i f_i^\ast
(\mu+\alpha_i f_i^\ast ) = \mu+\sum_i \alpha_i {f_i^\ast}^2\,.\]

Let $k_i$ denote the number of times item $i$ is used in $[0,T]$ by
the double-greedy algorithm, and $f_i = k_i/T$.  Let $\overline{M}_i$
denote the average memory on $i$ at the times when $i$ is picked.
Then we have that
\begin{eqnarray}
\overline{U}&= &\sum_{t=0}^T u_{x(t)}(t)/T=\sum_i \sum_{x(t)=i} u_i(t)/T=\sum_i \sum_{x(t)=i} (w_i(t)+\alpha_i M_i(t))/T\nonumber\\
&\geq&\sum_i \sum_{x(t)=i} (\mu+\alpha_i M_i(t))/T \quad\mbox{(by Lemma~\ref{lem:gap2}, $w_i(t)\geq \mu$)}\nonumber\\
&\geq&\mu + \sum_i \alpha_i f_i \overline{M}_i\,.\label{eq:3}
\end{eqnarray}
Write $\delta = \alpha r \log n$.  By Lemma~\ref{lem:gap2},
$w_i(t)=\mu + O(\delta)$ for each $i,t$.  We will show that
\vspace{-0.2cm}
\paragraph{Claim 1. $\alpha_i f_i = \alpha_i f_i^\ast - O(\delta)$.}
Observe that for any item $i$ which is picked $k_i$ times in $[0,T]$,
$\min_{0\leq t\leq T} M_i(t)\leq \Delta(k_i/T)\leq k_i/T=f_i$.  Hence, $\max_t w_i(t)\geq v_i -
2\alpha_iM_i(t) \geq v_i - 2\alpha_i f_i $. On the other hand, $w_i(t)
= \mu + O(\delta)$.  
We have $v_i - 2\alpha_i f_i = \mu+O(\delta)$. But $\mu = v_i -
2\alpha_i f_i^\ast$. Therefore $\alpha_i f_i \geq \alpha_i f_i^\ast - O(\delta)$. 
\vspace{-0.2cm}
\paragraph{Claim 2. $\alpha_i \overline{M}_i = \alpha_i f_i^\ast -O(\delta)$.}
Since $v_i - 2\overline{M}_i \leq \mu + \delta$, we obtain the bound by following the same argument as in the proof of Claim 1.  Now, plugging both claims into (\ref{eq:3}), we have that
\[\overline{U}\geq \mu + \sum_i \alpha_i (f_i^\ast -\frac{O(\delta)}{\alpha_i})^2 \geq \mu + \sum_i \alpha_i {f_i^\ast}^2 - O(\delta)
= U^\ast - O(\delta)
 = OPT-O(\delta)\,.\]
This last equality follows from $U^\ast = OPT - \alpha r$. 
This completes the proof. 
\end{proof}

\subsection{Social Choice is equivalent to individual choice}

Let $\ones$ denote the vector with all coordinates set to $1$ and $\valpha_i$ denote the vector of boredom coefficients $\alpha_{ij}$.
\vspace{-0.3cm}
\paragraph{Observation.}
For any diagonolizable matrix $A$ with largest eigenvalue $1$ and the second largest eigenvalue is at most $1-\eps$, there is a vector $c$ so that.
$\ones' A^t x - c' x \le  (1-\eps)^t \sqrt n O(|x|_2)$. The $O$ notation hides factors that depends on $A$. For a real, symmetric matrix the constant is $1$.
\begin{proof}
We will sketch the proof for real symmetric matrices. The same idea holds for non-symmetric matrices.
If $p_1, \ldots, p_n$ denote the eigenvectors of $A$ and $1=\lambda_1, \ldots, \lambda_n$ denote the eigenvalues then
$A^t x = \sum_j \lambda^t_j p_i v'_i x = p_1 v'_1 x +  \sum_{j>1} {\lambda^t}_j p'_i p_i x$.
Now,|$\sum_{j>1} \lambda^t_j p'_i p_i x|_2 \le (1-\eps)^t |x|_2$.
So $|\ones'(A^t x - p_1 p'_1 x)| \le  |\ones|_2 (1-\eps)^t |x|_2 = \sqrt n (1-\eps)^t |x|_2$
Setting $c = \ones' v_1 v'_1$ completes the proof.
\end{proof}

We will now prove theorem~\ref{thm:socialchoice}
\begin{proof}(\textbf{Theorem~\ref{thm:socialchoice}})
Let $\Delta \b(t)$ denote $\b(t) -\b(t-1)$.
Now $\u_i (t) =  A \u_i(t-1) + \Delta \b(t)$. This gives, $\u_i(t) =  A^t \v_i  + \sum_{j=0}^{t-1} A^j \Delta \b (t) $.
Note that $W_i(t)$ = $\ones' \u_i(t) =\ones'  A^t \v  + \sum_{j=0}^{t-1} \ones' A^j \Delta\ b (t-j) $

Note $\Delta b_{ij} (t) = \alpha_{ij} ((1-r) M_{ij}(t)  + r I_i (t) - M_{ij} (t)) =  \alpha_{ij} r( x_i (t) - M_{ij} (t))$.
So $|\Delta b_{ij} (t)|_2 \le r |\valpha_{i}|_2$.


Now  $|\ones' A^t \v_i - c' \v_i| \le (1-\eps)^t \sqrt n O(|\v_i|_2)$. For $t> (1/\eps)\Omega( \log( nr|\v_i|_2) $, this is at most $r$.
Also  $|\ones A^j \Delta \b_i (t-j) - c' \Delta \b_i (t-j)| \le  (1-\eps)^j  \sqrt n O( |\Delta  \b_i (t-j)|_2) \le (1-\eps)^j  \sqrt n rO( | \valpha_i|_2) $.
So,  $|\sum_j \ones A^j \Delta \b_i (t-j) - \sum_j c' \Delta \b_i (t-j)| \le (r/\eps) \sqrt n O(|\valpha_i|_2)$.
So,  $|\sum_j \ones A^j \Delta \b_i(t-j) - c'\ b_i (t)| \le (r/\eps)  \sqrt n O(|\valpha_i|_2)$.
Therefore $|W_i (t) - (c'\v_i - c' \b_i(t))| \le (r/\eps) O(\sqrt n  |\valpha_i|_2)$. Dividing by $n$ completes the proof.
\end{proof}
\vspace{-0.3cm}

\section{Experiments}
\label{sec:expts}
In this section, we provide experimental results to study the techniques presented in the paper. Our primary objectives is to  evaluate the quality of greedy and double-greedy algorithms for choosing items based on utility and boredom parameters estimated from the real data.
\eat{
\begin{itemize}
\setlength{\itemsep}{-0.1cm}
\item Verify whether boredom indeed plays a role in reducing the popularity (overall utility) of items. 
\item Compare the greedy and double-greedy algorithms for choosing items based on utility and boredom parameters estimated from the real data
\end{itemize}
}

\subsection{Setup}

We obtain data on the popularity of songs and movies from Google Trends~\cite{gtrends}. We collected weekly aggregate counts from query logs for popular songs from the last 3 years. Similar data was collected for popular movies. While the popularity of songs and movies depends on additional factors such as awards won by an album or a movie, our goal was to perform a controlled experiment only based on overall utility and boredom. Therefore, for each item we collected weekly aggregate counts starting from the highest peak in logs till there was an ``artificial peak'' due to an external event such as an award.  Further, we compare the utility obtained by our model with a baseline in which the user selects an item simply based on its utility without any discounting from boredom.  We describe how we compute the values of $\alpha$, $v$, and $r$ in the appendix.

\begin{table*}[t]
\footnotesize
\begin{tabular}{cc}
\begin{minipage}{3.5in}
\begin{tabular}{|c|c|c|c|}
\hline
{\bf Song} & $v$ & $\alpha$ & $r$ \\ \hline
The Climb & 12.3 &   9.9  &   0.097 \\
Lucky    &    2.6     &  1.58   &   0.114 \\ 
Snow Patrol - Chasing Cars  &     10.7   &   6.8   &    0.127 \\
I know you want me      &  7.95  &    6.5     &  0.077 \\
Viva la vida   &   12.4    &  9.1     &  0.16 \\
Stop and stare  &  10.5   &   9.4    &   0.092 \\
Disturbia      &   8      &   7.2     &  0.092 \\
Pocket full of sunshine &  7.6     &  6.3  &     0.14 \\
Supernatural superserious   &      24.2 &     22   &     0.15 \\
One step at a time      &  9.35    &  8.5     &  0.075 \\
\hline
\end{tabular}
\caption{\label{tab:ual1}$v, \alpha, r$ for the set of songs.}
\end{minipage}
&
\begin{minipage}{3.5in}
\begin{tabular}{|c|c|c|c|}
\hline
{\bf Movie} & $v$ & $\alpha$ & $r$ \\ \hline
Godfather  &     6.15 &  5.15  &  0.123 \\
Hancock & 9.6  &   8.8  &   0.128 \\
The Bucket List & 13.1 &   11.8  &  0.102 \\
Quantum of Solace   &    29.8  &  29  &    0.111 \\
Tropic Thunder &  25.6  &  24.8 &  0.082 \\
\hline
\end{tabular}
\caption{\label{tab:ual2} $v, \alpha, r$ for the set of movies.}
\end{minipage}
\end{tabular}
\vspace{-0.5cm}
\end{table*}

Table~\ref{tab:ual1} shows the $v$, $\alpha$, and $r$ values for a set of 10 songs used in our experiments while the corresponding data for the movie data set is shown in Table~\ref{tab:ual2}; here we allow different values of $r$, but we notice that all $r$-values within the domain of songs and movies are similar.

\subsection{Results}

We ran a set of experiments to verify the effectiveness of the greedy and double-greedy heuristics. We ran the experiments over $100000$ steps for both the data sets.  The average utility obtained by the user for both the data sets was computed and is shown in Table~\ref{tab:au}. We also show results for the baseline approach that always picks the same item with the highest base utility. Tables~\ref{tab:aus} and ~\ref{tab:aum} illustrate the average utility obtained by the user over the selected songs and movies respectively. The corresponding normalized frequencies are shown in parenthesis. As expected, in the baseline case where the user selects an item according to its base utility, the movie {\tt Quantum of Solace} (with a base utility of $29.8$) is always selected while in the case of songs, the song {\tt supernatual superserious} (with a utility of $24.2$) is selected.  Unsurprisingly, the average utility discounting boredom for this case is very low (see Table~\ref{tab:au}).

{\centering
\footnotesize
\begin{minipage}{0.45\textwidth}
\centering
  \begin{tabular}{|c|c|c|c|}
  \hline
  {\bf Dataset} & Greedy & Double-Greedy & Baseline \\ \hline
  Songs & $11.94$ & $13.53$ & $5.62$ \\
  Movies & $16.12$ & $17.30$ & $4.25$ \\
  \hline
  \end{tabular}
  \captionof{table}{Average utility over $100000$ time steps.}
  \label{tab:au}
\end{minipage}
\begin{minipage}{0.45\textwidth}
  \centering
  \includegraphics[width=2.5in]{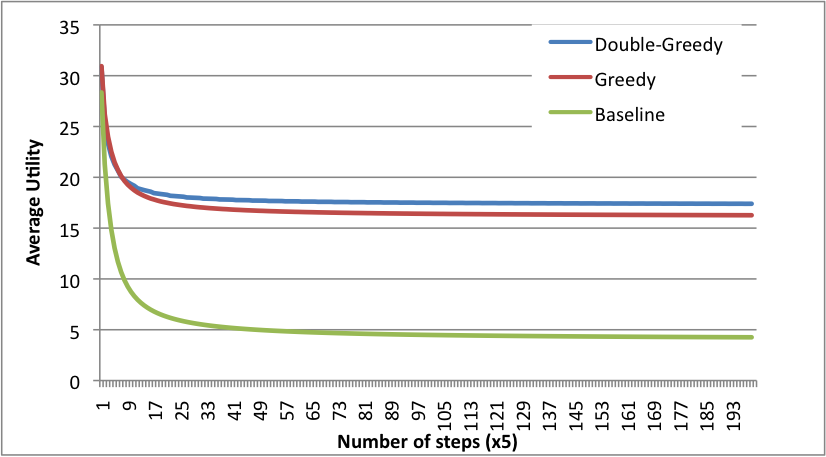}
  \captionof{figure}{Change in average utility over time.}
  \label{fig:aut}
\end{minipage}
}

\begin{table*}
\footnotesize
\begin{tabular}{cc}
\begin{minipage}{3.5in}
\begin{tabular}{|c|c|c|}
\hline
{\bf Song} & Greedy & Double-Greedy \\ \hline
The Climb & $11.17 (0.17)$ & $11.11 (0.17)$ \\
Snow Patrol - Chasing Cars  & - (0) & $10.24 (0.12)$ \\
Viva la vida & $11.61 (0.17)$ & $11.24 (0.21)$ \\
Stop and stare  & - (0) & $10.08 (0.08)$ \\
Supernatural superserious & $12.22 (0.67)$ & $17.52 (0.40)$ \\
\hline
\end{tabular}
\caption{\small Avg. utilities (frequencies) for selected songs.}
\label{tab:aus}
\end{minipage}
&
\begin{minipage}{3.5in}
\begin{tabular}{|c|c|c|}
\hline
{\bf Movie} & Greedy & Double-Greedy \\ \hline
Hancock & - (0) & $9.56 (0.04)$ \\
The Bucket List & - (0) & $11.40 (0.20)$ \\
Quantum of Solace & $16.41 (0.55)$ & $20.37 (0.40)$ \\
Tropic Thunder & $15.77 (0.45)$ & $18.01 (0.36)$ \\
\hline
\end{tabular}
\caption{\small Avg. utilities (frequencies) for selected movies.}
\label{tab:aum}
\end{minipage}
\end{tabular}
\vspace{-0.5cm}
\end{table*}

\medskip

In another experiment, we measured the change in the average utility with time. Figure~\ref{fig:aut} illustrates the change in average utility as the user selects different items at each time step for movies.  Naturally, the utility is highest at the very beginning as the user picks an item with the highest base utility and decreases subsequently as she picks items with highest discounted utility at each time step.

\section{Future Work}
\label{sec:conclusions}

As we mentioned, our model is by no means comprehensive. For example,
boredom may come from consuming similar items, or there may be a cost
when switching from item to item.  Taking into account these factors
raises some interesting algorithmic issues.  Fully incorporating these
extensions is left as future work.

\section{Acknowledgements}
We thank Atish Das Sarma for useful discussions.

\newpage

\bibliographystyle{abbrv}
\bibliography{fc}

\appendix
\section{Computing the model parameters}

Figure~\ref{fig:trend} shows the trend observed for a specific song from our dataset, {\tt I Know You Want Me}, over a 45-week period starting August 2, 2009. The first natural observation we make is that the total number of queries do indeed display a steady decline, which we attribute to boredom. From the data, we use the maximum count as the peak utility, $v_{peak}$, and let the final count be denoted $v_{final}$. We set $\alpha=v_{peak}-v_{final}$. Let $X(t)$ denote the aggregate count for the week $t$, we obtain the boredom parameter $r$ using the following equation:

$e^{-r t} = 1 - \frac{v_{peak}-X(t)}{v_{peak}-v_{min}}$

\noindent We plot $r t = - \ln (1  - \frac{v_{peak}-X(t)}{v_{peak}-v_{min}})$, and fit a linear line on the resulting curve and obtain $r$ from the slope.  Figure~\ref{fig:linear} shows the curve for {\tt I Know You Want Me}, from which we obtain the $r$ value.

\begin{figure*}[t]
\centering
\begin{tabular}{cc}
\begin{minipage}{3.0in}
\includegraphics[width=3.0in]{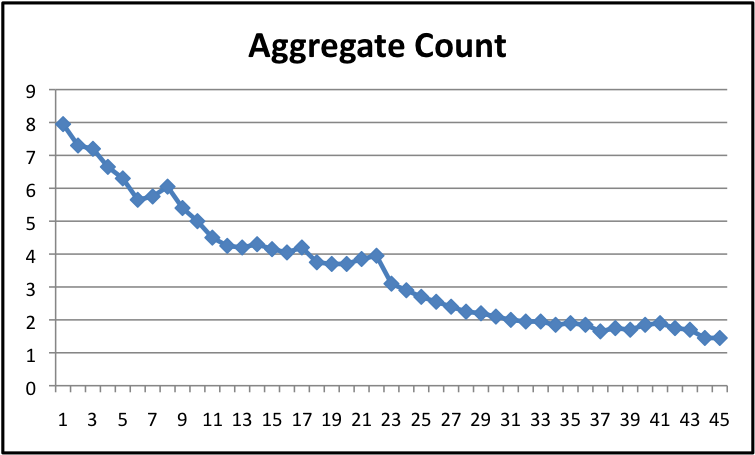}
\caption{Weekly aggregate query counts for {\tt I Know You Want Me} for a 45-week period from Google trends.}
\label{fig:trend}
\end{minipage}
&
\begin{minipage}{3.0in}
\centering
 \includegraphics[width=3.0in]{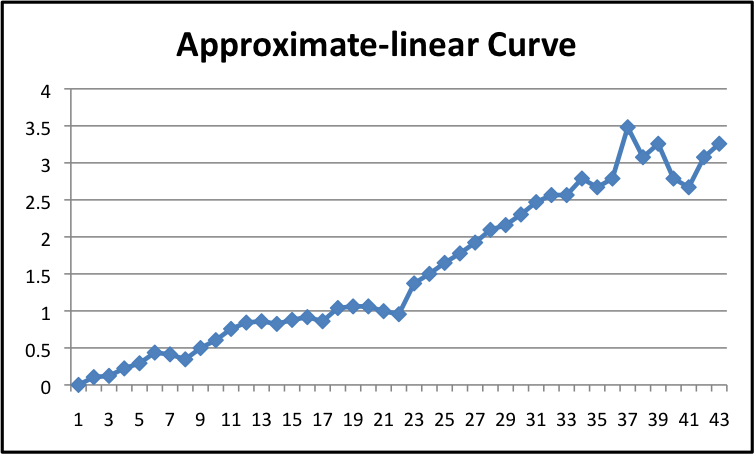}
\caption{Approximate linear trend for {\tt I Know You Want Me}, with slope giving $r$.}
\label{fig:linear}
\end{minipage}
\end{tabular}
\end{figure*}

\section{NP-hardness of item selection}

{\bf Restatement of Theorem~\ref{thm:nphard}:}
It is NP-hard to decide, given $T$, and $U^\ast$, and $n$ items,
whether there exist a assignment $Y$ with period $T$ such that
$U(Y)\geq U^\ast$.

\begin{proof}
The reduction is from the following problem.

\paragraph{Regular assignment problem (RAP).}
Given positive integers $p_1, p_2, \cdots, p_n$, determine if there
exists a sequence $y_0,y_1,\cdots$ where $y_t\in\{0,1,\cdots,n\}$ such
that for any $i\neq 0$, two consecutive appearances of $i$ in the
sequence are exactly $p_i$ apart.

It is shown in~\cite{barnoy-02} that the regular assignment problem is
NP-complete.  Note that for RAP, a regular assignment exists if and
only if it does so on a cycle with length $T=\prod_{i} p_i$.  We will
now reduce it to the optimal fashion selection problem.

Given $p_1,\cdots, p_n$, we create $n+1$ items such that a regular
assignment, if exists, maximizes the utility of any periodic selection
with period $T$.  Hence we can reduce RAP to the optimal selection
problem.  Item $0$ is a special item with $v_0 = 1$ and $\alpha_i=0$.
For $1\leq i\leq n$, we assign $v_i=\frac{2T}{p_i}$ and $\alpha_i =
1$.  Further let $r_i=1/T$ for $1\leq i\leq n$.  We claim that there
exists $U^\ast$ and $\epsilon\geq 1/T^2$ such that for a regular
assignment $Y$, $U(Y)\geq U^\ast$, and $U(Y)<U^\ast-\epsilon$
otherwise.

Consider the case when there is only item and when the selections are
made on the real line. Given $T$ and an item with parameters
$v,\alpha,r$, let $\Y_k(v,\alpha,r)$ be the set of all the selections
which have period $T$ and choose the item exactly $k$ times on the
real interval $[0,T)$.  Denote by
$U_k(v,\alpha,r)=\max_{Y\in\Y_k(v,\alpha,r)} U(Y)$ and $\delta
U_k(v,\alpha,r)= U_{k}(v,\alpha,r)-U_{k-1}(v,\alpha,r)$. The correctness
of the reduction follows from the following claims.

\paragraph{Claim 1.}  $U_k(v,\alpha,r)=kv-k\alpha\frac{(1-r)^{T/k}}{1-(1-r)^{T/k}}$, and the maximum is achieved with the regular assignment.

\paragraph{Claim 2.} For $1\leq v\leq n$, $U_k(v_i,1,1/T) = kv_i - (k^2-\frac{1}{2}k+\frac{1}{12}+o(1/k))$, and $\delta U_k(v_i,1,1/T) = v_i - (2k-\frac{3}{2}+o(1/k))$. 

\paragraph{Claim 3.}  For any non-regular \textbf{integral} selection $Y\in\Y_k(v_i,1,1/T)$,  $U(Y)<U_k(v_i,1,1/T)-1/T^2$. 

\medskip

Claim 1 holds because the total memory is minimized when the $k$
assignments are regularly spaced.  Claim 2 is a direct consequence of
Claim 1 by Taylor expansion on those particular parameters. Claim 3
follows by comparing the memory caused by adjacent items between
regular and non-regular assignments.

From Claim 2, we can see that $\delta U_k(v_i,1,1/T) \geq 3/2$ for
$k\leq T/p_i$ and $<0$ for $k\geq T/p_i+1$ for $1\leq i\leq n$, and
$\delta U_k(v_0,0,1/T)=1$.  Combining it with Claim 3, we have that
the utility gap between a regular and non-regular assignment is at
least $1/T^2$.  Therefore the reduction is correct and can be done in
polynomial time. 

\end{proof}

\end{document}